\def\shortonly#1{#1}
\def\longonly#1{}

\documentclass[a4paper]{scrartcl}
\usepackage[pdfa]{hyperref}
\usepackage{booktabs}
\usepackage{authblk}
\usepackage{bibunits}
\usepackage{graphicx}
\usepackage{amsfonts}
\usepackage{amsthm}
\usepackage{multirow}
\usepackage{mathtools}

\newtheorem{theorem}{Theorem}
\newtheorem{definition}[theorem]{Definition}
\newtheorem{proposition}[theorem]{Proposition}

\DeclareOldFontCommand{\it}{\normalfont\itshape}{\mathit}
\DeclareOldFontCommand{\bf}{\normalfont\bfseries}{\mathbf}
\catcode`\@=11 
\DeclareOldFontCommand{\sc}{\normalfont\scshape}{\@nomath\sc}

\defaultbibliographystyle{plainurl}% the mandatory bibstyle
\defaultbibliography{../../master}

\def\acosh{\mathrm{acosh}}

\def\map{\mu}
\def\dist{\delta}
\def\bbH{\mathbb{H}}
\def\ra{\rightarrow}
\def\tlimit{\gamma}

\def\gshort#1#2#3{G_{#1#2#3}}

\def\VofH{V}
\def\EofH{E}

\def\ghoz{\gshort 710}
\def\ghoo{\gshort 711}
\def\gooz{\gshort 810}

\newtheorem{thm}{Theorem}

\def\tally{\textsc{Pairs}}
\def\edgetally{\textsc{Edges}}

\title{Discrete Hyperbolic Random Graph Model}

\author{Dorota Celińska-Kopczyńska}%{Institute of Informatics, University of Warsaw, Poland}{dot@mimuw.edu.pl}{https://orcid.org/TODO-0002-1825-0097}{}
\author{Eryk Kopczyński}%{Institute of Informatics, University of Warsaw, Poland}{dot@mimuw.edu.pl}{https://orcid.org/TODO-0002-1825-0097}{}
\affil{Institute of Informatics, University of Warsaw, \texttt{\{\href{mailto:erykk@mimuw.edu.pl}{erykk},\href{mailto:dot@mimuw.edu.pl}{dot}\}@mimuw.edu.pl}}

%\keywords{hyperbolic geometry, scale-free networks, routing, tessellation}

\begin{document}

\maketitle

%TODO mandatory: add short abstract of the document
\begin{abstract}
The hyperbolic random graph model (HRG) has proven useful in the analysis of scale-free networks, which are ubiquitous in many fields, from social network analysis to biology. However, working with this model is algorithmically and conceptually challenging because of the nature of the distances in the hyperbolic plane.
In this paper, we propose a discrete variant of the HRG model (DHRG) where nodes are mapped to the vertices of a triangulation; our algorithms allow us to work with this model in a simple yet efficient way. We present experimental results conducted on networks, both real-world and simulated, to evaluate the practical benefits of DHRG in comparison to the HRG model.
\end{abstract}

\begin{bibunit}
Hyperbolic geometry has been discovered by 19th century mathematicians wondering about
the nature of parallel lines. One of the properties of this geometry is that the amount
of an area in the distance $d$ from a given point is exponential in $d$; intuitively, 
the metric structure of the hyperbolic plane is similar to that of an infinite binary
tree, except that each vertex additionally connects to two adjacent vertices on the same
level. 

Recently, hyperbolic geometry has proven useful in modeling hierarchical data
\cite{lampingrao,munzner}.
In particular, it has found application in the analysis of scale-free networks,
which are ubiquitous in many fields, from network analysis to biology \cite{papa}. 
In the hyperbolic random graph model (HRG), we place the nodes randomly in a hyperbolic disk;
nodes that are in a close neighbourhood are more likely to be connected.
%two vertices are connected with a probability depending on the distance between them.
%For correctly chosen values of parameters, 
The properties of a HRG,
such as its power-law degree distribution or high clustering coefficient, are similar to those of
real-world scale-free networks \cite{gugelman}. Due to high clustering coefficients, HRG is more accurate than  
earlier models such as Preferential Attachment \cite{prefatt} in modeling real-world networks.

Perhaps the two most important algorithmic problems related to HRGs are {\it sampling}
(generate a HRG) and {\it MLE embedding}: given a real-world network $H=(V,E)$, map the vertices of
$H$ to the hyperbolic plane in such a way that the edges are predicted as accurately as possible.
The quality of this prediction is measured with \emph{log-likelihood},
computed with the formula
$\sum_{v,w} \log p(v,w)$, 
where $p(v,w)$ is the probability that the model correctly predicts the existence or not of an edge $(v,w)$.
Those problems are non-trivial: even simply computing the log-likelihood,
using a naive algorithm would require time $O(|V|^2)$.
The original paper \cite{papa} used an $O(|V|^3)$ algorithm for embedding. Efficient algorithms have been found
for generating HRGs and the closely related Geometric Inhomogeneous Random Graphs in expected time $O(|V|)$ \cite{gengraph,efficgen,penschuck,loozgen,loozphd,funkegen} and for MLE embedding real-world scale-free networks
into the hyperbolic plane  in time ${\tilde O}(|V|)$ \cite{tobias}, which was a major improvement over 
previous algorithms \cite{hypermap,vonlooz}. 
Embedding has practical applications in link prediction \cite{hyperblink} and routing \cite{bogu_internet,tobias_alenex}.

This paper introduces and experimentally studies the discrete version of the HRG model (DHRG).
In the DHRG model, we use a tessellation rather than the hyperbolic plane.
%map the nodes not to the points in the hyperbolic plane but rather
%to the tiles of a tessellation. 
Instead of the hyperbolic
distance between points, we use the number of steps between two tiles in the tessellation.
Our approach has the following advantages:

\begin{itemize}
\item Avoiding numerical issues. DHRG is not based on the tuple of coordinates, which makes it immune to serious precision errors resulting from the exponential expansion. This way we solve a fundamental issue for hyperbolic embeddings \cite{tobias_alenex,reptradeoff}.
\item Algorithmic simplicity. In DHRG we find efficient algorithms for
sampling, computing the log-likelihood, and improving an embedding 
%are %relatively
%straightforward 
%generalizations of similar algorithms working on trees.
by generalizing similar algorithms for trees. Working with DHRG does not require a good understanding of hyperbolic geometry,
combating a major drawback of previous approaches.
\end{itemize}

The first potential objection to our approach is that discrete distances are inaccurate.
This inaccuracy comes from two sources: different geometry (distances in Euclidean square grid
correspond to the taxicab metric, which is significantly different from the usual Euclidean
metric) and discreteness. It is challenging to rigorously study the theoretical effects of discretization
on the properties of our model.
However, according to our experiments, these issues do not turn out to be threatening --- the
HRG embeddings are large enough to render discreteness insignificant, and discrete distances
are a good approximation of the actual hyperbolic distances (better than in the case of
Euclidean geometry).
%Our experimental results show that our discrete model has significant advantages over the usual
%continuous model.

Our experiments on artificial networks show that using DHRG improves the success rate of greedy routing in 77\% of the cases (depending on the parameters).
A DHRG embedding can be efficiently improved by moving the vertices
so that the log-likelihood becomes better. 
Our procedure yields about 10\% improvement of log-likelihood on state-of-the-art HRG embeddings on
real graphs. This result is supported by our extensive simulations on artificial graphs.

\section{Prerequisities}

In this section, we briefly introduce hyperbolic geometry and the HRG model.
A more extensive introduction to hyperbolic geometry can be found, e.g.,~in \cite{cannon}.

\begin{figure}
\begin{center}
\rotatebox{90}{\includegraphics[width=.3\textwidth]{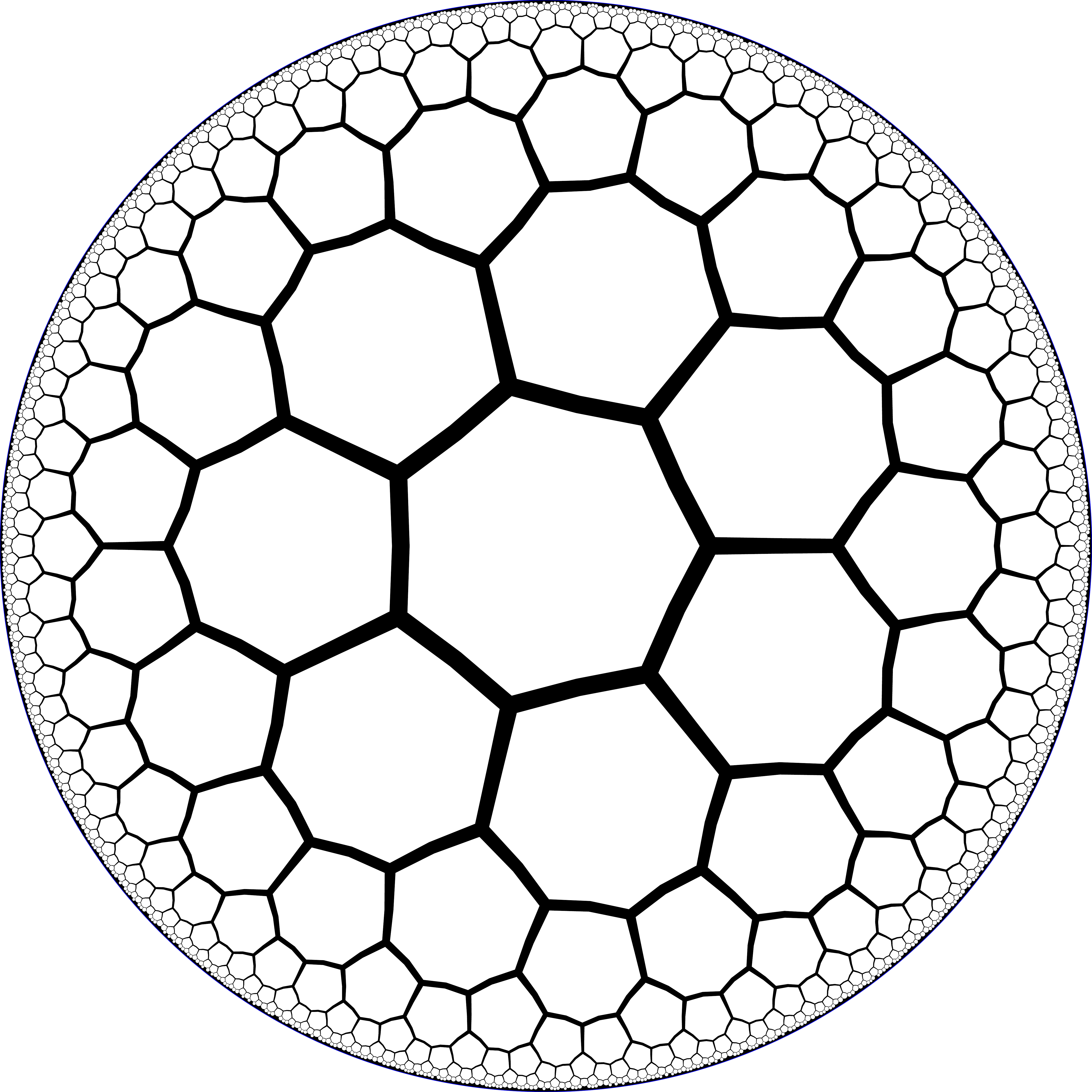}}
\includegraphics[width=.3\textwidth]{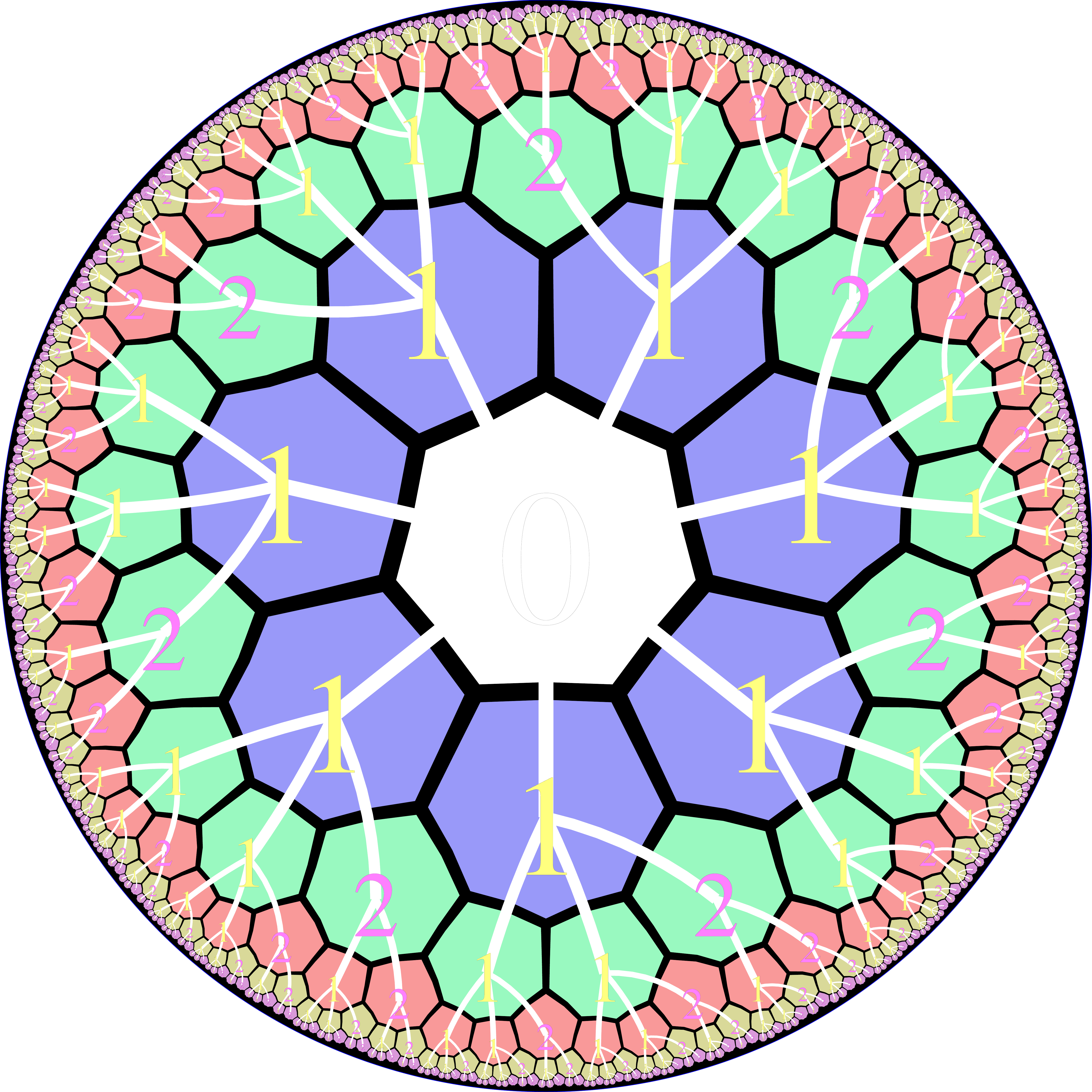}
\includegraphics[width=.3\textwidth]{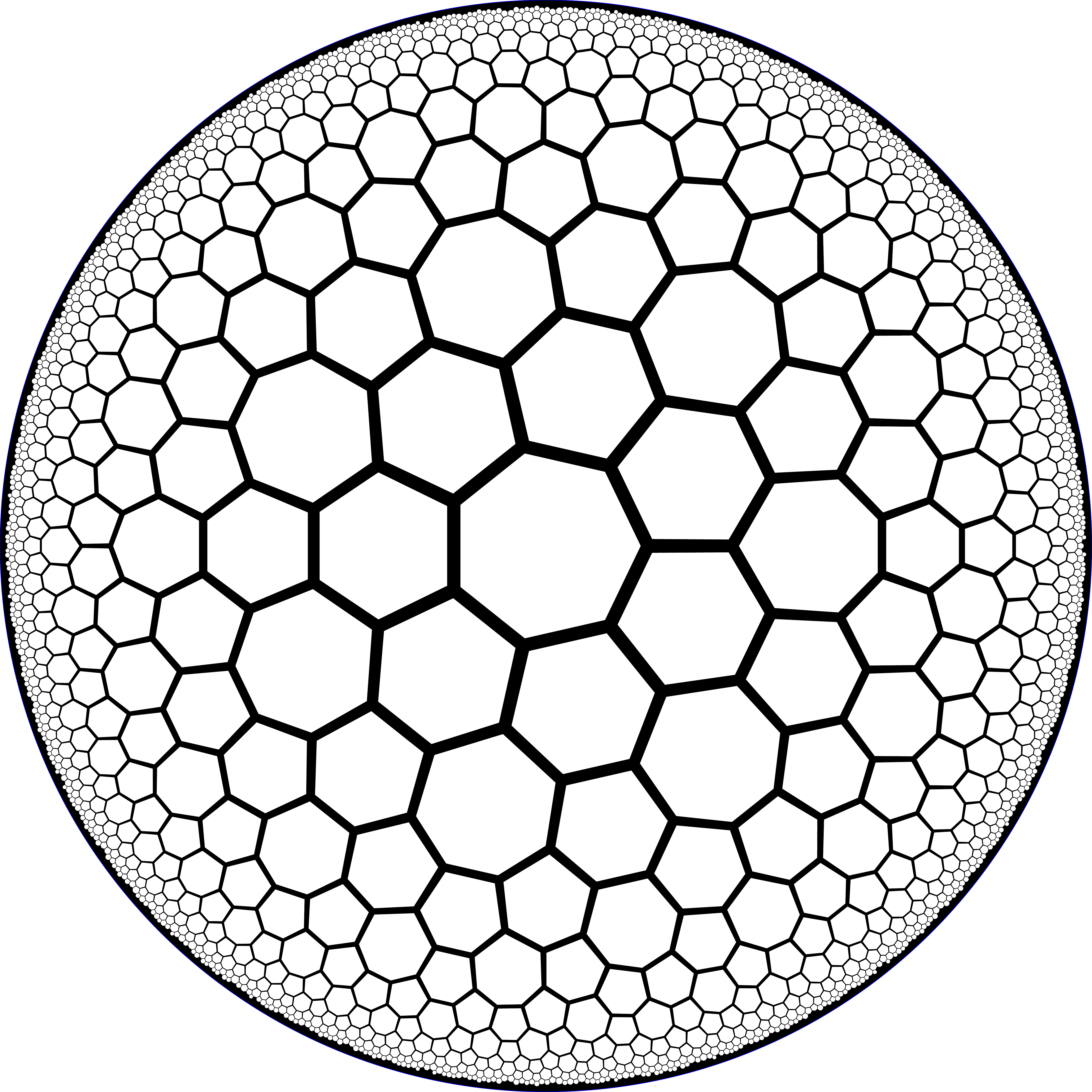}
\end{center}
\caption{\label{figtile}\label{growgraph}
(a) Order-3 heptagonal tiling ($\ghoz$). (b) Its growth. (c) Bitruncated heptagonal tiling ($\ghoo$).}
\end{figure}

Figure \ref{figtile} shows the order-3 heptagonal tessellation of the hyperbolic plane in the Poincar\'e model. In the hyperbolic metric, 
all the triangles, heptagons, and hexagons on each picture are actually of the same size.
The points on the boundary of the disk are infinitely far from the center. The area
of a hyperbolic circle of radius $r$ is exponential in $r$.

The hyperbolic plane (in the Minkowski hyperboloid model) is
\(\bbH^2 = \{(x,y,z): z>0, z^2-x^2-y^2=1\}.\)
The distance between two points $a=(x,y,z)$ and $a'=(x',y',z')$ is \(\delta(a,a') = \acosh(zz'-xx'-yy').\)
Most introductions to hyperbolic geometry
use the Poincar\'e disk model; however, the Minkowski hyperboloid model is very useful
in computational hyperbolic geometry, because the essential operations are simple generalizations
of their Euclidean or spherical counterparts. In particular, rotation of a point $v$ by angle $\alpha$ is given
by \[\left(\begin{array}{ccc}\cos\alpha & \sin\alpha & 0 \\ -\sin\alpha & \cos\alpha & 0 \\ 0 & 0 & 1\end{array} \right)v,\]
while a translation by $x$ units along the $X$ axis is given by 
\[\left(\begin{array}{ccc}\cosh x & 0 & \sinh x \\ 0 & 1 & 0 \\ \sinh x & 0 & \cosh x \end{array} \right)v.\]
We can easily map the Minkowski hyperboloid model
to the Poincar\'e disk model (e.g., for visualization purposes) using stereographic projection:
\[(x',y') = (x/(z+1), y/(z+1)).\]
In the hyperbolic polar coordinate system, every point is represented by two coordinates $(r,\phi)$,
where \[P(r,\phi) = (\cos(\phi)\sinh(r), \sin(\phi)\sinh(r), \cosh(r)).\] Here, $r$ is distance from the
central point $(0,0,1)$, and $\phi$ is the angular coordinate.

\begin{definition}
\label{def_hrg}
The {\bf hyperbolic random graph model}
has four parameters: $n$ (number of vertices), $R$ (radius), $T$ (temperature), and $\alpha$ (dispersion parameter). Each vertex $v \in \VofH = \{1,\ldots,n\}$
is independently randomly assigned a point $\map(v) = P(r_v, \phi_v)$, where
the distribution of $\phi_v$ is uniform in $[0,2\pi]$, and
the density of the distribution of $r_v \in [0,R]$ is given by 
$f(r) = { {\alpha \sinh(\alpha r)} \over {\cosh(\alpha R)-1}}$.
Then, for each pair of vertices
$v,w\in \VofH$, they are independently connected with probability $p(\dist(\map(v),\map(w)))$,
where $\dist(x,y)$ is the distance between $x,y \in \bbH^2$, and $p(d) = {1 \over 1+e^{(d-R)/2T}}$.
%This is conditional probability, after
%choosing the mapping $m$; the edge events are conditionally independent
%for each pair of vertices $(v_1, v_2)$.
\end{definition}

The parameter $\alpha$ controls the power-law exponent $\beta = 2\alpha+1$ \cite{gugelman}. The parameter $T$, typically
chosen to be in $[0,1]$, regulates the
importance of underlying geometry, and thus the clustering coefficient: with $T$ very close to 0, an edge exists iff $\dist(\map(v), \map(w))<R$, while 
with larger values of $T$ missing short edges and existing long edges are possible. In \cite{gugelman} and \cite{tobias}, $R$ equals $2\log n + C$,
where $C$ is a parameter adjusting the average degree of the resulting graph.

An {\bf MLE embedder} is an algorithm which, given a network $H=(V,E)$, finds a good embedding of $H$ in the hyperbolic
plane, i.e., parameters $R$, $T$, and $\alpha$, and a mapping $\map:V \ra \bbH^2$. The quality of the embedding
is measured with \emph{log-likelihood}, computed with the formula
\[\log L(\map) = \sum_{v<w \in \VofH} \log p_{\{v,w\}\in E}(\dist(\map(v), \map(w))),\]
where $p_\phi(d)=p(d)$ if $\phi$ is true and $1-p(d)$ if $\phi$ is false.
While not a goal by itself, we can expect that a better embedding (in terms of log-likelihood) will perform better in the applications, such as link prediction, greedy routing, visualization, etc. 
%While scale-free networks and HRG do not exactly correspond to real-word networks, they are still useful models, as algorithms whose theoretical foundation is based on them still appear to be useful at the tasks above.
%}

Now, we explain the structure of a hyperbolic tessellation, on the example of the order-3 heptagonal tiling
($\ghoz$ from Figure \ref{growgraph}b). Let $\delta(t_1,t_2)$ be the distance between two tiles ($\delta(t_1,t_2)=1$ iff $t_1$ and $t_2$ are adjacent).
Let $\delta_0(t)=\delta(t,t_0)$ be the distance of tile $t$ from the center tile $t_0$.
We denote the set of tiles $t$ such that $\delta_0(t)=d$ with $R_d$; for $d>0$ it is a cycle (in Figure \ref{growgraph}b sets $R_d$
are marked with colored rings).
Except for the central tile, every tile has one or two adjacent tiles in the previous layer
(called left and right parent, $p_L$ and $p_R$), two adjacent tiles in the same layer (left and right sibling, 
$s_L$ and $s_R$), and the remaining tiles in the next layer (children). By connecting every tile to its (right) parent,
we obtain an infinite tree structure. The numbers 0, 1, 2 denote the \emph{type} of the tile, which is
the number of parents. We can implement the function $\sc{Adj}(t,i)$ returning the pointer to the $i$-th neighbor to tile $t$,
clockwise starting from the parent in amortized time $O(1)$ by using a lazily generated representation
of the tessellation, where each tile is represented by a node holding pointers to the parent node, children nodes
(if already generated), and siblings (if already known).
Since every tile has at least two (non-rightmost) children, this structure grows exponentially. To gain intuition about hyperbolic tessellations, we recommend playing HyperRogue \cite{hyperrogue} as its gameplay focuses on the crucial concepts of exponential growth and distances in the tessellation graph.
%\cite{paperinappendix}
%; we also represent them in our structure.

In a Euclidean tessellation, the distance $x$ between the center of two adjacent tiles can be arbitrary; the
same tessellation can be as coarse or fine as needed.
This is not the case for hyperbolic tessellations. In the case of $\ghoz$, $x$ must be the edge length of a triangle with angles
$\pi/7$, $\pi/7$, and $2\pi/3$, which we can find using the hyperbolic cosine rule. If the central tile is
at Minkowski hyperboloid coordinates $(0,0,1)$, the coordinates of every other tile $t$ can be found by composing
translations (by $x$ units) and rotations (by multiples of $2\pi/7$).

%Because of this exponential growth, we change the angular coordinate $\phi$ much faster 
%when we are closer to the center. 

%The limit distribution of $X/\sqrt{d}$ is normal.

\section{Our contribution}\label{sec:scale-free}

%In this section, we use intuitions from Section \ref{sec:distcomp} to define the discrete
Here we introduce 
the discrete hyperbolic random graph model (DHRG), which is the discrete version of the HRG model (Definition \ref{def_hrg}). We map vertices $v \in \VofH$ not to points in the continuous
hyperbolic plane but the tiles of our tessellation,
i.e., $\map:\VofH \ra D_R$, where $D_R$ is the set of all tiles in distance at most $R$.
%The density function $f(r)$ from the HRG model cannot be reproduced exactly,
%but we can use $f(r) = {\alpha e^{\alpha r}}/(e^{\alpha r}-1)$, which is a perfect approximation
%(with only a small change close to the center).
%(it only slightly changes the low probability of placing a vertex very close to the center).

\begin{definition} A {\bf discrete hyperbolic random graph (DHRG)} with parameters
$n,$ $R,$ $T,$ and $\alpha$ is a random graph $H=(\VofH,\EofH)$ constructed as follows:
\begin{itemize}
\item The set of vertices is $\VofH = \{1, \ldots, n\}$,
\item Every vertex $v \in \VofH$ is independently randomly assigned a tile $\map(v) \in D_R$,
in such a way that the probability that $\map(v)=w \in R_d$ is proportional to $\frac{e^{d\alpha}}{|R_d|}$;
%where $d = \dist_0(w)$; % is the distance of $w$ from the center tile, and $R_d(G)$ is the set of tiles in distance $d$;
\item Every pair of vertices $v_1, v_2 \in \VofH$ are independently connected with an edge
with probability $p(\dist(\map(v_1),\map(v_2)))$, where $p(d) = {1 \over 1+e^{(d-R)/2T}}$.
\end{itemize}
\end{definition}

Note that the definition permits $\map(v_1)=\map(v_2)$ for two different vertices $v_1,v_2 \in \VofH$. This is not a problem; such vertices $v_1$ and $v_2$ are not necessarily connected,
nor do they need to have equal sets of neighbors. This may happen when two vertices are too similar to 
be differentiated by our model.

Determining the relation between discrete and continuous distances
theoretically is challenging. We find this relation experimentally. We compute the hyperbolic distance $R_d$ between $h_0 = (0,0,1)$
and $T_d$, where $T_d$ is a randomly chosen tile such that $\dist_0(T_d) = d$. We get $R_d = c_1 d  + c_2 + X_d$,
where $c_1 \approx 0.9696687$, $c_2 \approx 0.0863634$,
and $X_d$ is a random variable with a bell-shaped distribution, expected value $EX_d = o(1)$ and variance $\mbox{Var}\ X_d = \Theta(d)$.
\longonly{Table \ref{distversus} contains} 
\shortonly{(See Appendix \ref{appendix:versus} for}
the example detailed results for $\ghoz$.) 
\longonly{The asymptotic values are obtained in the following way: the difference $ER_d-ER_{d-1}$ converges very quickly to 0.9696687, and the difference
$ER_d-0.9696687d$ converges very quickly to 0.0863634.}
Note that the error is much better than for Euclidean hexagonal grid,
where a similar formula holds, but with $\mbox{Var}\ X_d = \Theta(d^2)$.
This is because in the Euclidean plane, the stretch factor depends on the angle between the line $[h_0, T_d]$
and the grid, which remains constant along the whole line; on the other hand, in the hyperbolic plane, this angle is not constant, and its values
in sufficiently distant fragments of that line are almost independent. 
\longonly{The number $n_d$ in Table \ref{distversus} is}
\shortonly{Let $n_d$ be}
the number of tiles in distance $d$;
they are every second Fibonacci number multiplied by 7, and thus $n_d = \Theta(\gamma^d)$ for $\gamma = \frac{3+\sqrt{5}}{2}$. 
The radius of a hyperbolic disk which has the same area as the union of all tiles in distance at most $d$ is $\log(\gamma)d + O(1)$; 
our coefficient $c_1$ is close to $\log(\gamma) \approx 0.9624237$, but slightly larger. 

Let $j$ be a function which maps every tile of our tessellation to the coordinate of its center. DHRG mappings can be converted to HRG by composing
$\map$ with $j$, and the other conversion can be done by finding the tile containing $\map(v)$ for each $v\in \VofH$. 
Since our experimental results show that the discrete distances are very good approximations of the continuous distances (up to the multiplicative constant $c_1$
and additive constant $c_2$), we expect the desired properties of HRGs, such as the high clustering coefficient and the power-law degree distribution, to still be true in DHRGs.
The parameters $\alpha$, $R$ and $T$ of the DHRG model will be obtained from the HRG parameters by dividing them by $c_1$.

%\begin{theorem}\label{dhrg_powerlaw}
%DHRG with parameters $\alpha>\log \gamma/2$, $R$, $T$ and $n$ has a power law degree distribution
%with exponent $\beta = 1 + 2(\alpha / \log \gamma)$.
%Furthermore, the expected clustering coefficient, average degree, and approximate degree distribution
%of a DHRG with given parameters can be computed in time polynomial in $R$. (Proof in the Appendix.)
%\end{theorem}

\section{Algorithms for DHRG}\label{sec:dhrgalgo}

In this section, we present our algorithms for working with the DHRG model.
Our algorithms will be efficient under the assumption $R=O(\log n)$ and $m=o(n^2/R)$.

\begin{proposition}\label{candodist}
There is a canonical shortest path between
every pair of tiles $(t_1, t_2)$. If $t_2$ is to the right from $t_1$, this canonical 
shortest path consists of: a number of right parent edges; at most one right sibling edge;
and a number of non-leftmost child edges.
%; or a symmetric case, where we have left parent edges.
If $t_1$ is to the right from $t_2$, the canonical path is defined symmetrically. For any pair
of tiles $(t_1,t_2)$, the distance $\delta(t_1,t_2)$ can be found in time $O(\delta(t_1,t_2))$.
\end{proposition}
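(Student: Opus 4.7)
The plan is to establish the proposition in three stages: construct the canonical path by an explicit algorithm, verify it has the claimed structural form, and finally prove it is a shortest path.

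\emph{Construction.} I would maintain two cursors $u_1, u_2$ initialized to $t_1, t_2$, together with the invariant that $u_2$ lies weakly to the right of $u_1$ in their respective rings (preserved because right-parent steps preserve the cyclic-order class of each cursor). The main loop proceeds as follows: if $\delta_0(u_1) > \delta_0(u_2)$, replace $u_1$ by $p_R(u_1)$; symmetrically if $\delta_0(u_2) > \delta_0(u_1)$; if $u_1 = u_2$, stop; if the cursors are at equal depth and right siblings, record a single right-sibling edge and stop; otherwise replace both cursors by their right parents. Splicing the right-parent trace from the $t_1$-side, the optional sibling edge, and the reversal of the right-parent trace from the $t_2$-side (which, read in reverse, is a sequence of non-leftmost right-child edges, since the leftmost right-child of a tile is exactly the type-2 tile shared with its left sibling and is excluded from the right-parent descent) yields a path of the stated shape.

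\emph{Complexity.} Each iteration performs $O(1)$ work using \sc{Adj} and strictly decreases $\max(\delta_0(u_1), \delta_0(u_2))$, so the loop runs for at most $\delta_0(t_1) + \delta_0(t_2) + 1$ iterations. Since every iteration contributes at least one edge to the constructed path, the running time is $O(L)$ where $L$ is the output path length, and once minimality is established this is $O(\delta(t_1, t_2))$.

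\emph{Minimality.} This is the main obstacle. I would argue by an exchange argument on the depth profile $(\delta_0(v_i))_i$ of an arbitrary path $P = (v_0, \ldots, v_L)$ from $t_1$ to $t_2$: parent and child edges shift the profile by $\pm 1$, sibling edges leave it fixed. Two length-nonincreasing local rewrites should suffice to bring $P$ into canonical form: (i) any ``dip'' (a descent by child edges eventually followed by an ascent returning to a previously visited layer) can be flattened by rerouting across the upper layer using at most one right-sibling edge; (ii) multiple sibling edges inside a single ring can be compressed by lifting through one right-parent step, exploiting the combinatorial fact that cyclically-adjacent tiles in $R_d$ have cyclically-adjacent-or-equal right parents in $R_{d-1}$. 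A well-founded potential such as (number of sibling edges, sum of depths visited) decreases under these rewrites, so iteration terminates at a canonical-form path of length at most $L$. Since the algorithm produces exactly such a path, it is optimal. The technical crux is formalizing rewrite (ii) and checking that the contraction of the right-parent map on ring distances really forbids any genuinely shorter non-canonical shortcut; this requires a case analysis of how $p_R$ acts on type-1 vs.\ type-2 tiles and is the step that justifies the ``at most one right sibling edge'' clause of the statement.
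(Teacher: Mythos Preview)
There is a genuine bug in your construction. You move \emph{both} cursors by right parents, and then claim that reversing the $t_2$-side trace yields only non-leftmost-child edges. This is false. If $w$ is a type-$2$ tile, then $w$ is precisely the \emph{leftmost} child of $p_R(w)$ (and the rightmost child of $p_L(w)$). So whenever the $t_2$-cursor passes through a type-$2$ tile, the reversed edge $p_R(w)\to w$ is a leftmost-child edge, violating the canonical form you are trying to produce. The correct ascent on the $t_2$-side is via \emph{left} parents: for a type-$2$ tile $w$, only $p_L(w)$ has $w$ as a non-leftmost child.

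This is not merely cosmetic; it breaks correctness of the distance. At the apex level of the true canonical path one has $p_R^{a}(t_1)$ equal to, or a right sibling of, $p_L^{b}(t_2)$. Your cursor $u_2=p_R^{b}(t_2)$ can sit one further step to the right of $p_L^{b}(t_2)$ (in $\ghoz$ the segment $[p_L^{b}(t_2),p_R^{b}(t_2)]$ has width $\le 1$), so at that level your two cursors may be two apart rather than adjacent. Your loop then pushes both up again, returning a value strictly larger than $\delta(t_1,t_2)$.

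The paper avoids this by tracking, for each of $t_1$ and $t_2$, the whole segment $[p_L^{k}(t_i),\,p_R^{k}(t_i)]$ at every level, and terminating as soon as any endpoint of one segment coincides with or is adjacent to an endpoint of the other. That symmetric bookkeeping is exactly what absorbs the left/right asymmetry you lost; once you fix your $t_2$-cursor to use $p_L$ (or, more robustly, carry both endpoints as the paper does), the rest of your plan---the complexity bound and the exchange/rewrite sketch for minimality---is reasonable and close in spirit to what the paper does.
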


The algorithm for finding $\delta(t_1,t_2)$ works as follows: for every $d$ starting from
$\min(\delta_0(t_1), \delta_0(t_2))$ and going downwards, we find the leftmost and rightmost ancestors of $t_1$ and $t_2$
in $R_d$. 
%$p_L^{\delta_0(t_1)-d}(t_1)$, 
%$p_R^{\delta_0(t_1)-d}(t_1)$, $p_L^{\delta_0(t_2)-d}(t_2)$, and $p_L^{\delta_0(t_2)-d}(t_2)$.
If one of the ancestors of $t_1$ matches one of the ancestors of $t_2$, we return
$\delta_0(t_1)+\delta_0(t_2)-2d$; if they do not match but are adjacent, we return
$\delta_0(t_1)+\delta_0(t_2)-2d+1$.

In our application, we will need to efficiently find the distance between a tile $t$ and all tiles $u \in A$.
We will do it using the following data structure:

\begin{definition}
A \emph{distance tally counter} is a structure with the following operations:

\begin{itemize}
\item {\sc Init}, which initializes the multiset of tiles $A$ to empty.
\item {\sc Add}($u$,$x$), which adds the tile $u$ to the multiset $A$ with multiplicity $x$ (which can be negative).
\item {\sc Count}($t$), which, for tile $t$, returns an array $T$ such that $T[d]$ is the number of elements of $A$ in distance $d$ from $t$.
\end{itemize}
\end{definition}

\begin{thm}\label{tallythm}
There is an implementation of distance tally counter where all the operations are executed in $O(R^2)$,
where $R$ is the maximum distance from the central tile.
\end{thm}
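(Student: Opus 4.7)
The plan is to maintain, for each tile $v$ in the tessellation, a counter $C[v][k]$ indexed by depth $k \geq 0$ that stores the multiplicity of $u \in A$ whose $k$-th right-parent ancestor equals $v$ (that is, $p_R^k(u) = v$). These counters live in a hash map, supporting $O(1)$ lookup and update. \textbf{Init} clears the hash map in $O(1)$. \textbf{Add}$(u, x)$ walks up the right-parent chain of $u$, which has at most $\delta_0(u) \leq R$ steps, incrementing $C[p_R^k(u)][k]$ by $x$ at each step; using the lazy tile representation of Section~\ref{sec:dhrgalgo}, this takes $O(R) \subseteq O(R^2)$.

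For \textbf{Count}$(t)$, I apply Proposition~\ref{candodist}: for any $u\in A$, one has $\delta(t, u) = \delta_0(t) + \delta_0(u) - 2d^* + \epsilon$, where $d^*$ is the largest level at which the right-parent ancestors $v_{d^*}=p_R^{\delta_0(t)-d^*}(t)$ and $u_{d^*}=p_R^{\delta_0(u)-d^*}(u)$ are either equal ($\epsilon=0$) or adjacent siblings ($\epsilon=1$). I iterate $d$ from $\delta_0(t)$ down to $0$, tracking $v_d$ and its same-level neighbors $s_L(v_d), s_R(v_d)$ in $O(1)$ amortized per step. At level $d$ the algorithm consults the three strips $C[s_L(v_d)][\cdot]$, $C[v_d][\cdot]$, $C[s_R(v_d)][\cdot]$; naively summing them over all $d$ double-counts, since a $u$ with $u_d = v_d$ also appears in $C[v_{d-1}][k+1]$, and analogously for the adjacent cases.

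To credit each $u$ exactly once, at its highest close-meeting level, the plan is to descend in $d$ while maintaining small auxiliary ``already-consumed'' arrays, one per candidate tile type, that record how many $u$'s with $\delta_0(u) = k + d$ have been processed at levels strictly greater than $d$. At level $d$ the genuinely new count is $C[w][k]$ minus the corresponding consumed entry, and this difference is added to the appropriate slot of $T$, with distance $\delta_0(t) - d + k$ if $w = v_d$ and $\delta_0(t) - d + k + 1$ otherwise. The main obstacle is transporting the consumed counts from level $d+1$ to level $d$, since the action of $p_R$ on the three candidate tiles at level $d+1$ need not land on the three candidates at level $d$; the precise transport depends on the local combinatorics of the tessellation, in particular on whether $v_d$ is a leftmost or rightmost child of $v_{d+1}$. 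Once this transport is implemented, each of the $O(R)$ levels contributes $O(1)$ tile lookups and $O(R)$ per-depth updates, yielding the claimed $O(R^2)$ time for Count.
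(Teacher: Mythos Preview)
Your reformulation of Proposition~\ref{candodist} is not correct, and this breaks the data structure. The distance algorithm in the paper tracks, at each level $d$, the \emph{pair} of leftmost and rightmost ancestors of each tile, and the returned distance depends on how those two pairs meet. A single right-ancestor does not determine that pair: two elements $u_1,u_2\in A$ can share $p_R^{k}(u_1)=p_R^{k}(u_2)=w$ while having different $p_L^{k}$ (one equal to $w$, the other equal to $s_L(w)$), and these two cases yield different distances to $t$. Concretely, take $t$ with a single ancestor at level $d$, say $p_L^{k}(t)=p_R^{k}(t)=v_d$, and $u$ with $p_L^{k'}(u)=v_d$, $p_R^{k'}(u)=s_R(v_d)$. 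The paper's algorithm sees that the ancestor sets share $v_d$ and returns $k+k'$. Your formula sees only the right ancestors $v_d$ and $s_R(v_d)$, declares them adjacent, and returns $k+k'+1$. Your counter $C[s_R(v_d)][k']$ cannot separate this $u$ from a $u'$ with $p_L^{k'}(u')=p_R^{k'}(u')=s_R(v_d)$, for which $k+k'+1$ \emph{is} the right answer; so no amount of bookkeeping on top of the arrays $C[\cdot][\cdot]$ can assign both of them the correct distance.

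The paper fixes exactly this by keying on the full \emph{segment} $s=[p_L^k(u),p_R^k(u)]$ rather than on the single tile $p_R^k(u)$: it stores $a_s[k]=|\{u\in A: P^k[u,u]=s\}|$, so that at each level of \textsc{Count}$(t)$ one enumerates the finitely many segments close to $P^k[t,t]$ and reads off the correct offset for each. Double-counting is handled by subtracting, for every counted segment $s$, the contribution already absorbed at its parent segment $P(s)$; this subtraction is well-posed precisely because the keys are segments and $P$ is a function on segments. Your ``consumed-array transport'' is the analogue of this step, but you leave it as an unresolved obstacle, and with right-ancestor-only keys it cannot be made correct: the information you would need to route the subtraction (namely which of the two possible left ancestors each $u$ has) has already been thrown away by \textsc{Add}.
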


\begin{proof} (sketch)
A {\bf segment} is a pair of tiles of form $[p_L^k(t), p_R^k(t)]$ for some tile $t$ and $k \geq 0$.
The notation $p_R^k$ here denotes the $k$-th iteration, i.e., the rightmost $k$-th ancestor in this case.
For a segment $s=[v_L,v_R]$, let $p(s)=[p_L(v_L), p_R(v_R)]$ be the parent segment.
In the case of $\ghoz$, either $p_L^k(t)=p_R^k(t)$, or they are neighbors. 
The algorithm from Proposition \ref{candodist} can be seen as follows:
we start with two segments $[t_1,t_1]$ and $[t_2,t_2]$, and
then apply the parent segment operation to each of them until we obtain segments which
are close. To construct an efficient distance tally counter, we need to tally the ancestor segments for
every $u \in A$.

For every segment $s$, we keep an array $a_s$, where $a_s[d]$ is the number of $u \in A$ such that
$p^d[u,u] = s$. The operation {\sc Add} updates these arrays in time $O(R^2)$. The operation
{\sc Count}($t$) constructs $s_d=p^d[t,t]$ for $d=0,\ldots,\delta_0(t)$, and uses the information 
in segments intersecting or adjacent to $s_d$ to count the number of elements of $u$ for which
the distance algorithm would return every possible distance. We need to make sure that we do not
count the same $u \in A$ twice (for different values of $d$).
However, this can be done by 
temporarily subtracting from $a_s[d]$ entries which correspond to $u$'s which have been already counted;
see Appendix \ref{appendix:proofs} for details.
\end{proof}

Other tessellations than the order 3 heptagonal tessellation are possible.
The order-3 octagonal grid, $\gooz$, is coarser.
Finer tessellations can be obtained by applying the Goldberg-Coxeter construction,
such as $\ghoo$ from Figure \ref{growgraph};
their growth is less extreme than for $\ghoz$, and thus the distance between segment ends, as well as the
number of sibling edges in the canonical path, may be greater than 1\longonly{; but they are still bounded}. 
\longonly{Let $D(G)$ be the bound; for tessellations useful in practice the value of $D(G)$ is small, we have $D(\ghoz)=1$ and $D(\ghoo)=3$.}
However, the algorithms generalize; see Appendix \ref{appendix:proofs}.

%\cite{paperinappendix}.

{\bf Computing the likelihood.} 
Computing the log-likelihood in the continuous model is difficult, because
we need to compute the sum over $O(n^2)$ pairs; a better algorithm
was crucial for efficient embedding of large real-world scale-free networks \cite{tobias}.
The algorithms above allow us to compute it quite easily and efficiently in the DHRG model.
To compute the log-likelihood of our embedding of a network $H$ with $n$ vertices and
$m$ edges, such that $\dist_0(v) \leq R$ for each $v \in \VofH$, we:

\begin{itemize}
\item for each $d$, compute $\tally[d]$, which is the number of pairs $(v,w)$ such that $\dist(v,w) = d$. The distance tally counter allows doing this in a straightforward way
({\sc Add}($\map(v)$,1) for each $v \in \VofH$ followed by {\sc Count}($\map(v)$) for each $v \in \VofH$), in time
$O(n R^2)$.

\item for each $d$, compute $\edgetally[d]$, which is the number of pairs $(v,w)$ connected by an edge such
that $\dist(v,w) = d$. This can be done in time $O(mR)$ simply by using the
distance algorithm for each of $m$ edges.
\end{itemize}

After computing these two values for each $d$, computing the log-likelihood is
straightforward. One of the advantages over \cite{tobias} is that we can then
easily compute the log-likelihood obtained from other values of $R$ and $T$,
or from a function $p(d)$ which is not necessarily logistic.

{\bf Improving the embedding.} A continuous embedding of good quality can be obtained by
first finding an approximate embedding and then improving it using a {\it
spring embedder} \cite{springembedder}. Imagine there are attractive forces
between connected pairs of vertices, and repulsive forces between unconnected pairs.
The embedding $m$ changes in time as the forces push the vertices towards locations
in such a way that the quality of the embedding is improved. Computationally, spring embedders are very expensive --- there are $\Theta(n^2)$
forces, and potentially many steps of simulation could be necessary.

On the contrary, our approach allows to improve DHRG embeddings easily.
We use a local search algorithm.
Suppose we have computed the log-likelihood and on
the way we have computed the vectors $\tally$ and $\edgetally$, as well as the distance
tally counter where every $\map(v)$ has been added. Let $v' \in \VofH$ be a vertex
of our embedding, and $w$ be a tile. Let $\map'$ be the new embedding given by
$\map'(v')=w$ and $\map'(v)=\map(v)$ for $v \neq v'$. Our auxiliary data lets us
compute the log-likelihood of $\map'$ in time $O(R^2 + R \deg(w))$. 

We try to improve the embedding in the following way: in each iteration,
for each vertex $v \in \VofH$, consider all neighbors of $\map(v)$, compute the log-likelihood for all
of them, and if for some $\map'$ we have $\log L(\map') > \log L(\map)$, 
replace $\map$ with $\map'$. Each iteration takes time $O(R^2n + Rm)$.

\section{Experimental setup}\label{sec:experiments}

The setup of the experiments is as follows. First, we map a network to the hyperbolic plane using the hyperbolic embedder based on the
algorithm from \cite{tobias} (for brevity, we will call it BFKL). This is the embedding stage. We start with a default parameters for the embedder ($R_0, T_0, \alpha_0$). This way we obtain the placement of the nodes in the hyperbolic plane, on which we estimate HRG predictive model with the same $R_0, T_0, \alpha_0$ as the embedder (prediction stage). We compute the log-likelihood ($L_1$). Usually, $L_1$ is even worse than the log-likelihood
of the na{\"\i}ve Erd\"os-R\'enyi-Gilbert model where each edge exists with probability $m/{n \choose 2}$. This is because the influence of the parameter $T$
on the quality of the embedding is small \cite{hypermap} and BFKL uses a small value of $T=0.1$ that does not necessarily correspond to the network. The prediction stage is irreplaceable here --- the log-likelihood is computed for the predictive model that takes the placement of the nodes as given. That is why, we should be still able to obtain a better log-likelihood during the prediction stage by estimating HRG models with different parameters ($R_1, T_1$); for brevity, the log-likelihoods obtained via such an optimization will be called ``the best log-likelihoods''. $L_2$ is the best log-likelihood obtained with the default embedding. As it proxies the best possible outcome for the default embedding, it will serve as the benchmark scenario in our experiment. 

Now we need to check if using DHRG improves the quality of the embedding. To this end, we convert our embedding into the DHRG model, by finding the nearest tile 
of our tessellation for each $v \in \VofH$. $L_3$ is the best discrete log-likelihood for a new embedding (computed with the logistic function). We call this phase \emph{discretization}.

Next, we try our \emph{local search} algorithm (20 iterations) and compute the best discrete log-likelihood after local search ($L_5$).%
\footnote{We do 20 iterations because further iterations tend to move less and less vertices, and thus their effect on the quality of embedding is minimal.}
Finally, we proceed to the \emph{de-discretization} phase, i.e., convert our mapping back to the HRG model and find the best log-likelihood $L_7$.

We denote the running times as $t_m$ (converting HRG to DHRG), $t_l$ (computing $\tally$ and $\edgetally$),
$t_e$ (local search). For comparison, we also include the time of computing the best continuous log-likelihood
$t_c$ using a parallelized $O(n^2)$ algorithm\footnote{We compute the distance for every pair of nodes. This way we create the array $d$, where $d[n]$ is the number of distances in the interval
$[n\varepsilon, (n+1)\varepsilon)$; we take $\varepsilon=10^{-4}$. This computation is parallelized. The array $d$ allows computing a very good approximation of the log-likelihood for any value of parameters $R$ and $T$ in time
$O(R/\varepsilon)$.)}, and the time of computing the log-likelihood by the BFKL algorithm ($t_b$).
All the times are measured on Intel(R) Core(TM) i7-9700K CPU @ 3.60GHz with 96 GB RAM.
Most computations use a single core, except the continuous log-likelihood values ($L_2$ and $L_7$)
which use 8 cores. 
Our implementation, using the RogueViz non-Euclidean geometry engine \cite{rogueviz2021}, 
as well as the results of our experiments can be found at 
\url{https://figshare.com/articles/software/Discrete_Hyperbolic_Random_Graph_Model_code_and_data_/16624369}.

\section{Experiments on real-world networks}

%\longonly{
\begin{table*}
\centering
%\addtolength{\tabcolsep}{-5pt}
\resizebox{\linewidth}{!}{
\begin{tabular}{|l|rrrr|l|rrrr|rrrrrr|}
\hline
name & $n$ & $m$ & $R$ & $\alpha$ & grid & -$L_2$ & $\frac{L_3}{L_2}$ & $\frac{L_5}{L_3}$ & $\frac{L_7}{L_2}$ & MB & $t_m$ [s] & $t_l$ [s] & $t_e$ [s] & $t_c$ [s] & $t_b$ [s]\\
\hline
Fb & 4309 & 88234 & 12.57 & 0.755 & $G_{710}$ & 176131 &  1.04 &  0.93 &  0.97 & 40 & 0.196 &  0.03 & 10 & 0.35 & 0.048\\
Fb & 4309 & 88234 & 12.57 & 0.755 & $G_{810}$ & 176131 &  1.07 &  0.92 &  0.98 & 54 & 0.183 &  0.03 & 8 & 0.5 & 0.048\\
F09 & 74946 & 537972 & 20.90 & 0.855 & $G_{710}$ & 3954627 &  1.04 &  0.86 &  0.90 & 2010 & 5.432 &  1.16 & 222 & 131 & 0.896\\
F09 & 74946 & 537972 & 20.90 & 0.855 & $G_{810}$ & 3954627 &  1.06 &  0.84 &  0.90 & 1866 & 4.634 &  0.81 & 176 & 128 & 0.896\\
Sd & 77352 & 327431 & 26.00 & 0.610 & $G_{710}$ & 2091651 &  1.25 &  0.72 &  0.92 & 2659 & 5.326 &  1.05 & 201 & 130 & 0.292\\
Sd & 77352 & 327431 & 26.00 & 0.610 & $G_{810}$ & 2091651 &  1.27 &  0.71 &  0.92 & 2253 & 4.618 &  0.78 & 158 & 126 & 0.292\\
Am & 334863 & 925872 & 24.11 & 0.995 & $G_{710}$ & 6957174 &  1.04 &  0.86 &  0.91 & 5677 & 23.34 &  5.40 & 721 & 2690 & 1.444\\
Am & 334863 & 925872 & 24.11 & 0.995 & $G_{810}$ & 6957174 &  1.04 &  0.85 &  0.90 & 4868 & 19.76 &  3.92 & 576 & 2811 & 1.444\\
F11 & 405270 & 2345813 & 26.34 & 0.715 & $G_{710}$ & 20028756 &  1.22 &  0.76 &  0.93 & 9995 & 30.36 &  7.36 & 1349 & 3715 & 5.216\\
F11 & 405270 & 2345813 & 26.34 & 0.715 & $G_{810}$ & 20028756 &  1.22 &  0.76 &  0.93 & 8940 & 25.84 &  5.38 & 1113 & 3636 & 5.216\\
Go & 855804 & 4291354 & 26.06 & 0.865 & $G_{710}$ & 22762281 &  1.30 &  0.75 &  0.98 & 18226 & 64.75 & 16.05 & 2363 & 16618 & 3.560\\
Go & 855804 & 4291354 & 26.06 & 0.865 & $G_{810}$ & 22762281 &  1.32 &  0.75 &  0.99 & 15314 & 54.31 & 10.93 & 1823 & 15818 & 3.560\\
Pa & 3764118 & 16511741 & 28.74 & 0.995 & $G_{810}$ & --- & --- &  0.90 & --- & 66396 & 250.6 & 73.65 & 9335 & --- & 41.24\\
\hline
\end{tabular}} % L3 (Patents) = 208618134
\caption{Experimental results on real-world networks.
Facebook (Fb), Slashdot (Sd), Amazon (Am), Google (Go), and Patents (Pa) networks from SNAP database; F09 and F11 are GitHub networks. MB is the amount of memory in megabytes, and time is in seconds.
\label{tab:realworld}
}
\end{table*}
  
Table~\ref{tab:realworld} contains detailed results of the experiments.
The networks we use come mostly from SNAP database \cite{snapnets}. To benchmark our algorithm on a large network,
we additionally study undirected social networks with power-law-like scale behavior,
representing the following relations that occured between 2009 and 2011 in GitHub \cite{euromed} (see Appendix \ref{appendix:github} for the details).
It makes sense to use finer tessellations for smaller graphs and coarser tessellations for larger ones.
In most cases, we conduct our experiments on two tessellations: $\ghoz$ (order-3 heptagonal) and the coarser $\gooz$ (order-3 octagonal).
For the Facebook graphs, we also try finer tessellations; see Appendix \ref{appendix:choosetes} for the details.
Finer tessellations give better log-likelihoods, but a too dense grid dramatically decreases the performance without giving significant benefits.

The parameters $n$, $m$, $R$ and $\alpha$ come from the BFKL embedder ($n$ is the number of vertices, $m$ is the number of edges). Discretization worsens the log-likelihoods; for smaller networks $L_3$ are usually slightly worse than $L_2$, but this is not surprising. First, our edge
predictor has lost some precision in the input because of our tesselation's discrete nature. Second, the original prediction was based on the hyperbolic distance $r$ while our prediction
is based on the tesselation distance $d$, and the ratio of $r$ and $d$ depends on the direction.
The whole procedure improves the log-likelihood by up to 10\%. For the patents network, computing $L_2$ and $L_7$ was not feasible; $L_3$ was 208618134.

The current version uses a significant amount of RAM. 
It should be possible to improve this by better memory management (currently vertices and segments which are
no longer used or just temporarily created are not freed), or possibly path compression.

When it comes to the running time, computing $\edgetally$ and $\tally$ takes negligible time ($t_l$) when compared to the network size. Even for as large network as Patents, those operations took slightly over a minute (Table~\ref{tab:realworld}); for most of the networks analyzed, they took a few seconds. Converting time $t_m$ increases with the size of the network, for most of the networks analyzed we need less than a minute. In comparison to computing log-likelihood as performed by BFKL, our solution is comparable in time; they are of the same order of magnitude. The longest time is needed for local search. Not surprisingly, the larger graph, the longer it takes to find improvements.
%
% komentarz o t_c?
%
% czy to są dobre wartości?
However, their spring embedder working in quadratic time is much slower
than our local search.\footnote{
We have also ran the BFKL spring embedder on the Facebook graph for $T=0.54336$ and seed 123456789,
reporting the log-likelihood of -131634, better than ours. However, 
this result appears incorrect; our implementation
reports the original log-likelihood of $L_1=-211454$, and can improve it to $L_7=-157026$.
Computing the log-likelihood incorrectly may negatively impact the quality of BFKL embedding.
}
%This appears to be a problem in their approximation; indeed, replacing their optimized log-likelihood function with a $\Theta(n^2)$ 
%one from {\tt hyperbolic.cpp} yields result equal to ours.
%[Actually, it reports double our result, but this seems to be caused by counting
%each pair of vertices twice, which is easy to fix and irrelevant for the optimized embedder.]
%}

%$t_m$ = stare $t_1$
%$t_l$ = stare $t_2+t_3$
%$t_e$ = stare $t_4$
%$t_b$ = czas obliczania log-likelihood przez BFKL który nie był nazwany

%The running time
%of parts of our algorithm were: $t_1$=0.3 s (converting),
%$t_2$=0.016 s (computing $\edgetally$), $t_3$=0.16 s (computing $\tally$),
%$t_4$=12 s (local search). The BFKL embedder computes the log-likelihood in 0.2 seconds, which
%is comparable.
%}

\section{Experiments on simulated graphs}

\subsection{Log-likelihood} \label{exp:ll}
Our experiments on real-world graphs showed that discretization, local search, and de-discretization improves the quality of the embedding in terms of log-likelihood.
In this section we conduct simulations to see whether we can generalize those observations.

We use the generator included with \cite{tobias} to generate HRGs with the following parameters: varying $n$,
$\alpha=0.75$, $T=0.1$, $R=2\log(n)-1$. These are the default values of parameters used by this generator.
%\shortonly{Due to space constraints, in the figures, we present only the results for $n=500,5000,15000$ here; see \cite{dhrg_arxiv} for other values of $n$ considered.}
For every value of $n$ considered, we generate 1000 graphs.
For each of the generated graphs $H$, we embed $H$ using BFKL, compute $L_2$, convert to DHRG on tessellation $\ghoz$,
improve the embedding (up to 20 iterations), convert back to HRG, and compute $L_7$. We also 
compute $L_g$, which is the log-likelihood of the originally generated embedding (groundtruth).

\begin{figure}
\begin{center}
%\shortonly{\includegraphics[width=\linewidth]{erc_eerc_gz_ijcai.pdf}}
%\longonly{
\includegraphics[width=\linewidth]{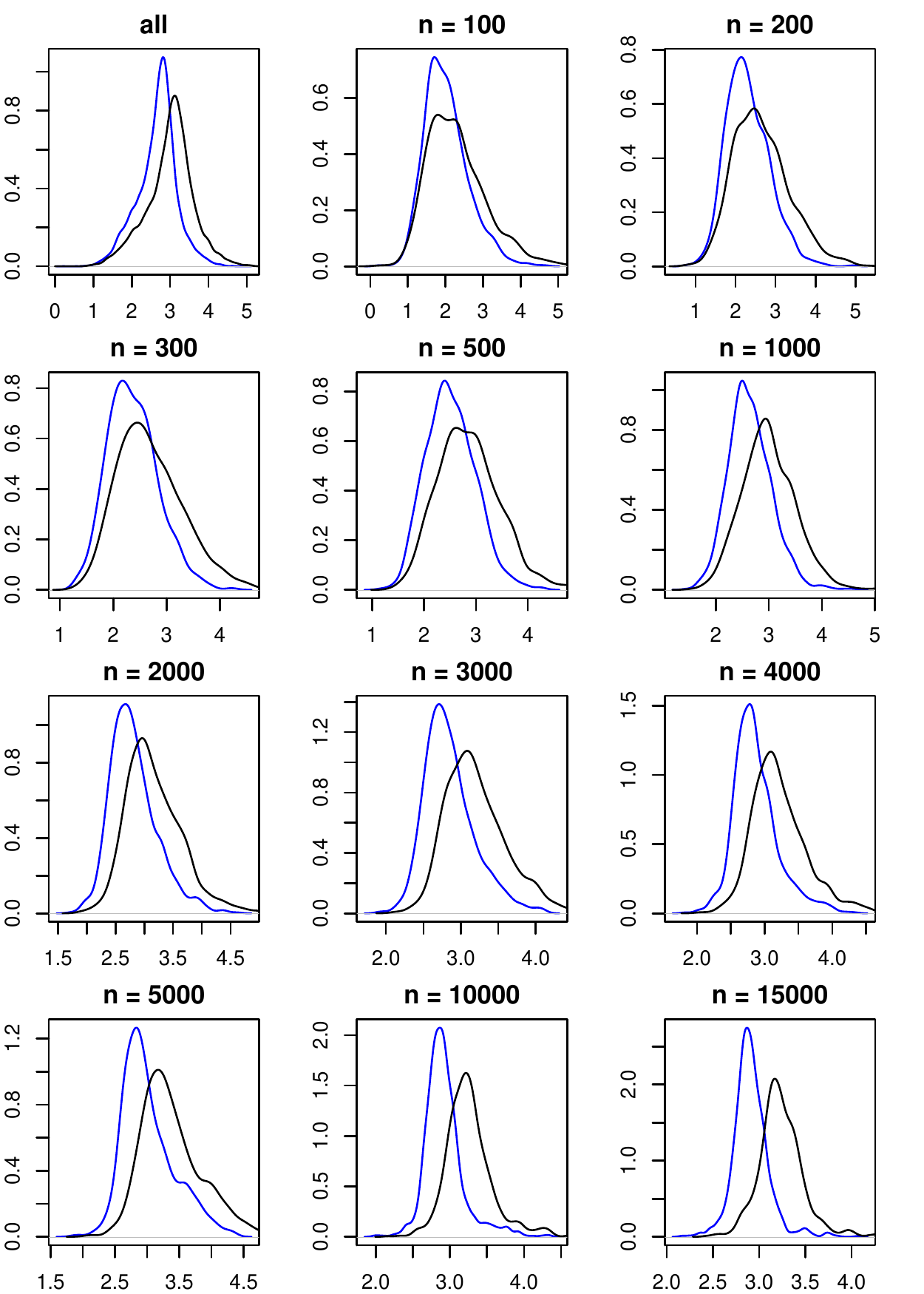}%}
\end{center}
\caption{\label{ercfig}Density of $L_2/L_g$ (black) and $L_7/L_g$ (blue) for $T = 0.1$.}
\end{figure}

Figure \ref{ercfig} shows the density graph of $L_2/L_g$ and $L_7/L_g$ for every $n$.
Since $L_g$ is negative, 
larger values of these ratios are worse. We know that an optimal embedder should achieve log-likelihood at least as
good as $L_g$. We find that our algorithm yields a significant improvement.

\begin{table}[h]
%\begin{tabular}{|l|c|cc|cc|}
\begin{center}
\begin{tabular}{lrrrrr}
%vertices & Percent of improvements & Average improvement & Median improvement
% & Average improvement towards groundtruth & Median improvement towards groundtruth & Wilcoxon test's p-value \\ \hline

\toprule
%\hline

\multirow{2}{*}{$n$} &
\multirow{2}{*}{improved} &
%\multicolumn{2}{c}{rel. to 0} & \multicolumn{2}{|c|}{groundtruth} \\ \cline{3-6}
\multicolumn{2}{c}{rel. to 0} & \multicolumn{2}{c}{groundtruth} \\ 

& & avg & median & avg & median \\ \midrule %\hline
% & Average improvement towards groundtruth & Median improvement towards groundtruth & Wilcoxon test's p-value \\ \hline

%\longonly{
100 & 84.6 & 12.8 & 11.9 & 36.2 & 22.0  \\ %\hline
200 & 95.2 & 12.5 & 12.1 & 20.6 & 20.4  \\ %\hline 
300 & 97.2 & 11.8 & 10.6 & 18.6 & 17.5  \\ %\hline
%}
500 & 99.1 & 11.5 & 11.1 & 18.0 & 17.5  \\ %\hline
%\longonly{
1000 & 99.6 & 11.4 & 11.0 & 17.2 & 16.8  \\ %\hline
2000 & 100.0 & 11.0 & 10.7 & 16.2 & 15.9  \\ %\hline
3000 & 100.0 & 10.9 & 10.7 & 15.9 & 15.8  \\ %\hline
4000 & 100.0 & 10.6 & 10.4 & 15.4 & 15.2  \\ %\hline
%}
5000 & 100.0 & 10.5 & 10.3 & 15.0 & 14.7  \\ %\hline
%\longonly{
10000 & 100.0 & 10.3 & 10.2 & 14.8 & 14.7  \\ %\hline
%}
15000 & 100.0 & 10.0 & 9.9 & 14.5 & 14.4  \\ %\hline
\bottomrule
\end{tabular}
\end{center}
\caption{Changes in log-likelihood after applying our procedure to BFKL embedders. 
Percent of improvements signifies the percent of the cases the log-likelihood increased (improved). 
Average and median improvements computed conditionally on improvement. 
In the "rel. to 0" columns, we present the values of $100 \cdot (1-L_7/L_2)$; in the "groundtruth" column,
we present the values of $100 \cdot (1-(L_g-L_7)/(L_g-L_2))$. %See the Technical Appendix for our results for other $n$.
\label{erc_eerc}}
\end{table}

%We check if our procedure improves the results from BFKL embedder.
%We measure goodness of fit with log-likelihood (the higher the value of the measure, the better). 
According to the data in Table~\ref{erc_eerc},
we notice that our procedure leads to better embeddings than the pure BFKL embedder no matter the size of the graph. Our procedure yields log-likelihoods
that are closer to the log-likelihood of the groundtruth. The improvement towards
groundtruth is not stable; with the increase of the graph the BFKL embeddings converge, so our improvements become less prominent (around 15\% for large graphs). However, the improvements are statistically significant (p-values are always 0.00 for paired Wilcoxon test with alternative hypothesis that the values of log-likelihood increased after our procedure).

In real-life cases, hardly do we know the groundtruth; comparison of log-likelihoods for BFKL embedder and our procedure
resembles what would we do with real data (columns rel. to zero in Table~\ref{erc_eerc}). In such a case, we may expect that our
procedure should improve the result by average by 10\%. This result
seems stable no matter the size of the graph.

\subsection{Greedy routing}
One potential application of hyperbolic embedding is greedy routing \cite{bogu_internet,tobias_alenex}.
A node $v$ obtains a packet to node $w$; if $w$ is not directly connected to $v$, $v$ needs to select
one of its neighbors through which the packet will be forwarded. In the greedy routing approach, we use the
embedding to select the connected node which is the closest to the goal $w$. Greedy routing fails
if, at some point in the chain, none of the connected nodes is closer to $w$ than $v$ itself.
In \cite{bogu_internet} a hyperbolic embedding of the Internet is constructed, yielding 97\% success rate
of greedy routing. This is much better than a similar algorithm based on geographical placement of
nodes (14\%). High success rate is also robust with respect to link removals \cite{bogu_internet}. While the MLE method of finding
embedding yields worse results than embeddings constructed specifically for the purpose of greedy routing \cite{tobias_alenex},
it is interesting to see how good our methods are according to this metric.

%We denote the success rate of the original embedding produced by BKL (computed over all pairs of nodes $(v,w)$)
%as $p_{1c}$, and after discretization, local search and de-discretization respectively by $p_{1d}$, $p_{2d}$ and
%$p_{2c}$.

\shortonly{
\begin{table*}
\centering
%\addtolength{\tabcolsep}{-1.5pt}
\scalebox{.9}{
%\begin{tabular}{|c|l|r|r|r|r|r|r|r|r|r|r|}
\begin{tabular}{cl|rrr|rrr|rrrr}
\toprule
%$T$ &       & \multicolumn{3}{c|}{deterioration}    & \multicolumn{3}{c|}{improvement}    & \multicolumn{4}{c|}{improvement} \\ 
%tiling & $n$   & \multicolumn{3}{c|}{(discretization)} & \multicolumn{3}{c|}{(local search)} & \multicolumn{4}{c|}{(three steps)} \\ \cline{3-12}
$T$ &       & \multicolumn{3}{c}{deterioration}    & \multicolumn{3}{c}{improvement}    & \multicolumn{4}{c}{improvement} \\ 
tiling & $n$   & \multicolumn{3}{c}{(discretization)} & \multicolumn{3}{c}{(local search)} & \multicolumn{4}{c}{(three steps)} \\ % \cline{3-12}
&       & \% & med & avg & \% & med & avg & \% & med & avg & p-value \\ \midrule
% \ghoz, T=0.1
        &100      &  95.3  & 3.28 & 5.14 & 81.1 & 2.55 & 3.71 &  56.4 & 1.11 & 1.81 & 0.001 \\ 
        &200      &  97.3  & 3.27 & 4.70 & 82.5 & 2.24 & 3.10 &  53.8 & 0.99 & 1.35 & 0.037 \\ 
        &300      &  99.0  & 3.25 & 4.37 & 85.7 & 1.97 & 2.51 &  55.3 & 0.91 & 1.21 & 0.000 \\ 
        &500      &   99.7  & 3.19 & 4.34 & 87.9 & 1.81 & 2.38 &  55.7 & 0.71 & 1.00 & 0.005 \\ % \cline{2-12}
        &1000     &  100.0 & 3.25 & 3.92 & 90.8 & 1.66 & 1.99 &  54.0 & 0.51 & 0.70 & 0.071 \\ 
T=0.1   &2000     &  100.0 & 3.27 & 3.73 & 94.3 & 1.58 & 1.73 &  51.6 & 0.40 & 0.50 & 0.806 \\ 
$\ghoz$ &3000     &  100.0 & 3.27 & 3.61 & 98.6 & 1.53 & 1.68 &  49.7 & 0.33 & 0.40 & 0.999 \\ 
        &4000     &  100.0 & 3.33 & 3.52 & 98.6 & 1.55 & 1.63 &  49.0 & 0.29 & 0.37 & 0.999 \\ 
        & 5000     &   100.0 & 3.31 & 3.52 & 98.4 & 1.52 & 1.61 &  46.9 & 0.27 & 0.36 & 0.999 \\ %\cline{2-12}
        &10000    &  100.0 & 3.27 & 3.37 & 99.8 & 1.56 & 1.60 &  42.1 & 0.19 & 0.23 & 1.000 \\ 
        & 15000    &   100.0 & 3.37 & 3.39 & 99.9 & 1.61 & 1.62 &  32.2 & 0.16 & 0.19 & 1.000 \\ \midrule
% \ghoo, T=0.1
        &100      &  86.4   & 1.65 & 2.76 & 72.5 & 1.67 & 2.59 & 61.0 & 1.00 & 1.68 & 0.00  \\ 
        &200      &  90.8   & 1.67 & 2.49 & 78.9 & 1.61 & 2.25 & 63.1 & 0.94 & 1.42 & 0.00  \\ 
        &300      &  94.5   & 1.59 & 2.30 & 82.0 & 1.42 & 1.97 & 65.9 & 0.84 & 1.19 & 0.00  \\ 
        & 500      &   98.2   & 1.57 & 2.16 & 86.8 & 1.35 & 1.69 & 67.6 & 0.80 & 1.04 & 0.00 \\ % \cline{2-12}
        &1000     &  99.4   & 1.65 & 1.96 & 93.9 & 1.23 & 1.42 & 70.4 & 0.59 & 0.73 & 0.00  \\ 
T=0.1   &2000     &  99.8   & 1.61 & 1.82 & 94.3 & 1.16 & 1.27 & 72.6 & 0.48 & 0.60 & 0.00  \\ 
$\ghoo$ &3000     &  99.8   & 1.64 & 1.75 & 96.4 & 1.15 & 1.21 & 74.2 & 0.42 & 0.49 & 0.00  \\ 
        &4000     &  100.0  & 1.65 & 1.76 & 98.1 & 1.11 & 1.19 & 74.0 & 0.40 & 0.45 & 0.00  \\ 
        & 5000     &  100.0  & 1.63 & 1.75 & 98.3 & 1.15 & 1.20 & 78.2 & 0.40 & 0.46 & 0.00  \\ % \cline{2-12}
        &10000    &  100.0  & 1.61 & 1.67 & 99.6 & 1.17 & 1.19 & 82.5 & 0.36 & 0.38 & 0.00  \\  %wstawione
        & 15000    &  100.0  & 1.64 & 1.66 & 99.7 & 1.19 & 1.20 & 87.0 & 0.32 & 0.33 & 0.00 \\ \midrule
% \ghoz, T=0.7
        & 300   & 99.9  & 3.88 & 4.53 & 88.5 & 2.27 & 2.72 & 60.7 & 1.04 & 1.37 & 0.00 \\ 
        & 500   & 100.0 & 3.60 & 4.26 & 88.6 & 1.74 & 2.15 & 59.4 & 0.84 & 1.11 & 0.00 \\ % \cline{2-12}
        & 1000  & 100.0 & 3.39 & 3.73 & 88.1 & 1.25 & 1.48 & 55.9 & 0.65 & 0.78 & 0.00 \\ 
T=0.7   & 2000  & 100.0 & 3.21 & 3.47 & 87.7 & 0.99 & 1.15 & 52.9 & 0.46 & 0.59 & 0.02 \\ 
$\ghoz$ & 3000  & 100.0 & 2.87 & 3.00 & 84.4 & 0.72 & 0.84 & 46.7 & 0.30 & 0.42 & 0.98 \\ 
        & 4000  & 100.0 & 3.22 & 3.31 & 89.7 & 0.80 & 0.88 & 38.3 & 0.28 & 0.37 & 1.00 \\ 
        & 5000  & 100.0 & 2.98 & 3.02 & 88.6 & 0.63 & 0.69 & 36.1 & 0.19 & 0.24 & 1.00 \\ % \cline{2-12}
        & 10000 & 100.0 & 2.95 & 2.96 & 90.6 & 0.55 & 0.60 & 29.1 & 0.13 & 0.22 & 1.00 \\ 
        & 15000 & 100.0 & 3.23 & 3.23 & 95.5  & 0.73 & 0.75 & 24.4 & 0.12 & 0.16 & 1.00 \\ \midrule
% \ghoo, T=0.1
        & 300   & 98.3  & 1.86 & 2.25 & 81.1 & 1.56 & 1.90 & 63.5 & 1.00 & 1.32 & 0.00 \\ 
        & 500   & 99.5  & 1.81 & 2.14 & 85.9 & 1.30 & 1.57 & 68.1 & 0.94 & 1.12 & 0.00 \\ % \cline{2-12}
        & 1000  & 99.9  & 1.65 & 1.86 & 84.0 & 0.97 & 1.11 & 67.5 & 0.62 & 0.75 & 0.00 \\ 
T=0.7   & 2000  & 100.0 & 1.61 & 1.71 & 87.1 & 0.75 & 0.83 & 67.3 & 0.44 & 0.55 & 0.00 \\ 
$\ghoo$ & 3000  & 100.0 & 1.40 & 1.48 & 83.4 & 0.53 & 0.61 & 64.1 & 0.32 & 0.21 & 0.00 \\ 
        & 4000  & 100.0 & 1.61 & 1.64 & 90.5 & 0.56 & 0.62 & 61.7 & 0.28 & 0.36 & 0.00 \\ 
        & 5000  & 100.0 & 1.48 & 1.50 & 88.8 & 0.45 & 0.49 & 59.5 & 0.25 & 0.28 & 0.00 \\ % \cline{2-12}
        & 10000 & 100.0 & 1.47 & 1.48 & 94.1 & 0.41 & 0.46 & 61.0 & 0.18 & 0.22 & 0.00 \\ 
        & 15000 & 100.0 & 1.61 & 1.61 & 98.4 & 0.53 & 0.53 & 65.1 & 0.17 & 0.19 & 0.00 \\ \bottomrule
\end{tabular}}
\caption{Changes in the success rate of greedy routing. Average and median computed conditionally on change
(improvement on the condition of the improvement or deterioration on the condition of the deterioration). P-values for Wilcoxon paired tests.
\label{routing}\label{prouting}\label{routing_70}\label{grouting_70}}
\end{table*}
}

Table~\ref{routing} summarizes the changes in success rates after our procedure. We discuss the conditional changes (improvement on the condition of the improvement
or deterioration on the condition of the deterioration), because we find the absolute changes possibly misleading for small networks.
Percent of changes proxies us the probability of the effect. If the effect occurs, we know what to expect without the bias of the counter-effect. 
The success rates of the original embedding are around 93\% on average.
Discretization usually worsens the success probability. This appears to be caused by the fact that two neighbors
of $v$ can be in the same distance to $w$ (because of discretization), while originally the more useful node is closer.
We notice that if the deterioration due to discretization occurs,
the average percentage deterioration decreases with the increasing size of the graph. This is expected, since the distances are larger in larger graphs.
Meanwhile,
the median deterioration (if the deterioration occurs) is stable at around 3\%. This means that with the increasing size of the network we
face serious deteriorations after discretization less often. The effect is statistically significant (p-values of Wilcoxon paired tests with the alternative hypotheses that the success rate is lower after the discretization are always 0.00).
The whole procedure improves the results in more than 45\% of cases, however the change for bigger graphs is not statistically significant (p-values for paired Wilcoxon tests with the
alternative hypotheses that the success rates increased after our procedure are greater than 10\%). For large graphs (over 4000 vertices) in about half of the cases the success rate after the procedure is not worse than the original one.

To reduce the negative effect of discretization, we also perform the same experiment using the $\ghoo$ grid. % (see Table \ref{tab:facebook}).
%We obtain new success rates $p'_{1d}$, $p'_{2d}$, $p'_{2c}$. 
The results are shown in Table \ref{prouting}.
Contrary to the $\ghoz$ tessellation, usage of the finer tessellation for routing (when all three steps are performed)
improves the success rate, and the effect is statistically significant. 
The general directions of the
effects resemble the case of coarser grid.
Discretization leads to a statistically significant decrease in the success rate; the bigger
graphs, the less frequent a noticeable deterioration.

\subsection{Changing the temperature}

\begin{figure}
\begin{center}
%\shortonly{\includegraphics[width=\linewidth]{temp_3x1_ijcai.pdf}}
\shortonly{\includegraphics[width=\linewidth]{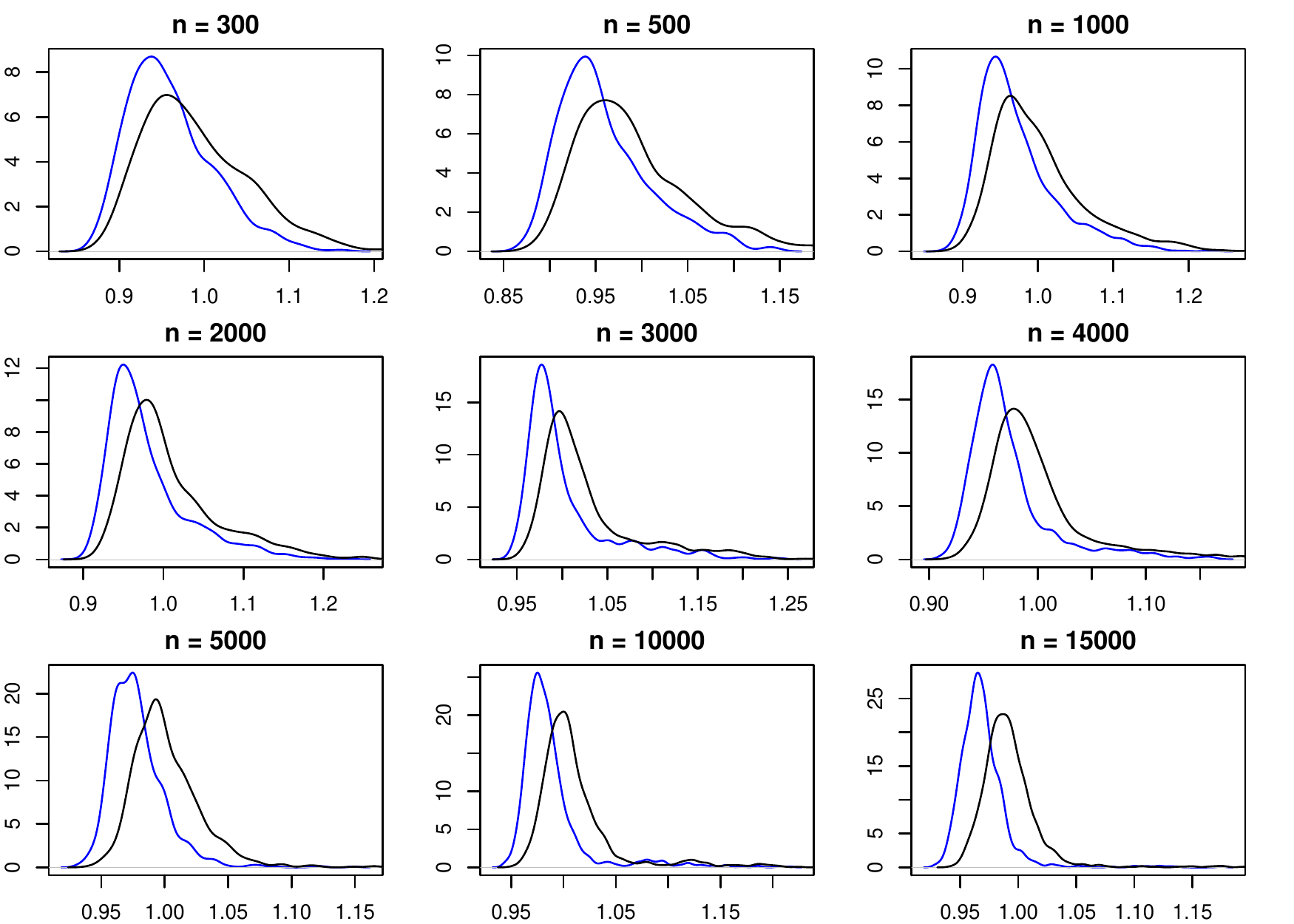}}
\end{center}
\caption{Density of $L_2/L_g$ (black) and $L_7/L_g$ (blue) for $T=0.7$.\label{gtseven}}
\end{figure}

Real-world graphs are considered to have larger temperature $T$ than 0.1.
We have also experimented with changing the temperature $T$ to 0.7.
This value of $T$ has been used for mapping the Internet \cite{bogu_internet}.

For $T=0.7$, the embedder and local search actually tend to achieve a better
result than the groundtruth. For all sizes of graphs, we observe about $3\%$ 
improvement in the absolute value of log-likelihood on average ($1-L_2/L_7$)
(Figure \ref{gtseven}). This effect is statistically significant
(p-values for paired Wilcoxon tests with alternative hypotheses that the values of log-likelihood increased after our procedure always 0.00).

One could argue that the temperature plays critical role for the success of the greedy routing. 
With a higher temperature, the links are less predictable, and thus we can expect a lower success rate.
We conducted the simulations to check if the insights change in the case of the value of temperature more typical
to real-world networks. Table \ref{routing_70} 
%and \ref{grouting_70} 
contains the results of the experiments with respect to the size of the graph.

Our insights driven from simulations for $T = 0.7$ resemble the conclusions for the case of $T= 0.1$. In the case of the coarse tessellation (Table \ref{routing_70})
the possibility of improvement depends on $n$. For small and medium-sized graphs the effect is statistically significant; the larger the graph, the less probable statistically significant improvement (p-values of Wilcoxon paired test greater that any conventional significance levels). With the finer tessellation (Table \ref{grouting_70}), no matter the
size of the graph, the post-procedure improvement is statistically significant%
\longonly{ (p-values for paired Wilcoxon tests with the alternative hypothesis that the success rate increased after our procedure are always 0.00)}.

We also checked if the change of the tessellation significantly improves the success rate. To this end, we perfomed paired Wilcoxon tests with the alternative
hypotheses that the success rate increased after using a finer tessellation. In all the cases,
the p-values were 0.00, so 
the effect of the tessellation is
statistically significant.

\section{Conclusion}
We introduced the discrete version of the HRG model, 
which allows
efficient algorithms while avoiding numerical issues.
We also presented the result of the experimental evaluation of this model. 
%discrete version of the 
We analyzed both the real-world networks and 20,000 artifical ones,
paying special attention to the possible application of the model in greedy routing.
%The discretization 
%us to avoid numerical precision issues and use simple and efficient algorithms. 
Our experimental evaluation shows that we achieve a good approximation of log-likelihood
in the 
%standard continuous 
HRG
model and that using local search significantly improves its log-likelihood, even when converting back to HRG. % embedding.
This %effect 
is visible both in real-world and in simulated networks.
A similar procedure also slightly improves the success rate of greedy routing when a sufficiently
fine tessellation is used.
The choice of the tessellation seems to be crucial for the success rate for all tested values of the parameters.

\section*{Acknowledgments}

We would like to thank all the referees for their comments which have greatly improved the paper.
This work has been supported by the National Science Centre, Poland, grant DEC-2016/21/N/HS4/02100.
\putbib
\end{bibunit}
 
\appendix
\newpage

\begin{bibunit}

\section{Tessellation distances versus hyperbolic distances}\label{appendix:versus}

\begin{table}[h!]
\begin{center}
\begin{tabular}{rrrr}
$d$ & $n_d$ & $ER$ & $\mbox{Var}(R)$ \\
 0 &        1 &   0.00000000 & 0.00000000 \\
 1 &        7 &   1.09054966 & 0.00000000 \\
 2 &       21 &   2.02973974 & 0.02467308 \\
 3 &       56 &   2.99594181 & 0.03923368 \\
 4 &      147 &   3.96512066 & 0.05370197 \\
 5 &      385 &   4.93471877 & 0.06823850 \\
 6 &     1008 &   5.90437726 & 0.08279668 \\
 7 &     2639 &   6.87404448 & 0.09735968 \\
 8 &     6909 &   7.84371297 & 0.11192362 \\
 9 &    18088 &   8.81338165 & 0.12648772 \\
10 &    47355 &   9.78305035 & 0.14105185 \\
11 &   123977 &  10.75271905 & 0.15561599 \\
12 &   324576 &  11.72238775 & 0.17018013 \\
13 &   849751 &  12.69205646 & 0.18474427 \\
14 &  2224677 &  13.66172516 & 0.19930841 \\
15 &  5824280 &  14.63139386 & 0.21387255 \\
16 & 15248163 &  15.60106257 & 0.22843669
\end{tabular}
\caption{Tessellation distances ($d$) versus expected hyperbolic distances ($ER_d$): expected value and variance. 
\label{distversus}}
\end{center}
\end{table}

Table \ref{distversus} contains the example detailed results for $\ghoz$ (see Section \ref{sec:scale-free}).
The asymptotic values are obtained in the following way: the difference $ER_d-ER_{d-1}$ converges very quickly to 0.9696687, and the difference
$ER_d-0.9696687d$ converges very quickly to 0.0863634.

\section{Proofs and pseudocodes}\label{appendix:proofs}
\def\bbR{\mathbb{R}}
\def\bbN{\mathbb{N}}

\begin{proof}[Generalization of Proposition \ref{candodist}] \rule{0cm}{0cm}
For $D(G)>1$, there is one more type of a canonical path possible, where
$w$ is a parent of $v$, but neither the leftmost nor rightmost one.

The idea of the algorithm is to find the shortest canonical path.
Suppose that $\dist_0(v) = d' + \dist_0(w)$, where $d' \geq 0$.
For each
$i$ starting from 0 we compute the endpoints of the segments $P^{d'+i}(v)$ and
$P^i(w)$. We check whether these segments are in distance at most $\tlimit$ on the ring;
if no, then we can surely tell that we need to check the next $i$; if yes,
we know that the shortest path can be found on one of the levels from $i$ to 
$i+\lfloor \tlimit/2 \rfloor$. We compute the length of all such paths and return the
minimum. 

The pseudocode of our algorithm is given below.
It uses five integer variables $a_i,d_i,d$ and four 
tile variables $l_i$, $r_i$ ($i=1,2$). Variables $a_i$, $d_i$, $l_i$ and $r_i$
are modified only by the function push($i$), which lets us keep the following
invariant: $\dist_0(l_i) = \dist_0(r_i) = d_i$, $l_i = p_L^{a_i}(v_i)$, 
$r_i = p_R^{a_i}(v_i)$. By $v+k$, where $v$ is a tile, we denote the $k$-th right sibling of $v$. 

The lines (\ref{easystart}-\ref{easyend}) deal with the special case for $D(G)>1$ mentioned above.
%first two cases of the
%Proposition \ref{prop_canonical}.

The main loop in lines (\ref{mainloopstart}-\ref{mainloopend}) deals with the other cases. At all times
$d$ is the currently found upper bound on $\dist(v,w)$. It is easy to check that
the specific shortest path given in Proposition \ref{candodist} will be 
found by our algorithm.

Every iteration of every loop increases $a_1$ or $a_2$, and an iteration can occur
only if $a_1+a_2 < \dist(v,w)$. Therefore, the algorithm runs in time $O(\dist(v,w))$.
%An implementation is available (see Appendix \ref{app_impl}, file {\tt segment.cpp}).

%\begin{figure}[ht!]
\begin{enumerate}
\itemsep 0em
\def\i{\item}
\def\iz{\item}
\def\cindent{\hskip 0.5cm}

\iz {\bf function} {\sc Distance}$(v_1, v_2)$:
\i  \cindent {\bf for} $i \in \{1,2\}$:
\i  \cindent \cindent $l_i := v_i$
\i  \cindent \cindent $r_i := v_i$
\i  \cindent \cindent $d_i := \dist_0(v_i)$
\i  \cindent \cindent $a_i := 0$
\i  \cindent {\bf function} push($i$):
\i  \cindent \cindent $a_i := a_i+1$
\i  \cindent \cindent $d_i := d_i-1$
\i  \cindent \cindent $l_i := p_L(l_i)$
\i  \cindent \cindent $r_i := p_R(r_i)$
\i  \cindent {\bf while} $d_1 > d_2:$
\i  \cindent \cindent push(1)
\i  \cindent {\bf while} $d_2 > d_1:$
\i  \cindent \cindent push(2)
\i  \label{easystart} \cindent {\bf for} $i \in \{1,2\}$ {\bf if} $v_i \in [l_i, r_i]:$
\i  \label{easyend} \cindent \cindent \cindent return $a_{3-i}$
\i  \cindent $d := \infty$
\i  \label{mainloopstart} \cindent {\bf while} $a_1 + a_2 < d$:
\i  \cindent \cindent {\bf for} $i \in \{1,2\}$ {\bf for} $k \in \{0,\ldots,D(G)\}$ {\bf if} $l_i = r_{3-i} + k:$
\i  \cindent \cindent \cindent $d := \min(d, a_1 + a_2 + k)$
\i  \cindent \cindent push(1)
\i  \label{mainloopend} \cindent \cindent push(2)
\i  \cindent {\bf return} $d$
\end{enumerate}
\end{proof}

\begin{proof}[Proof of Theorem \ref{tallythm} for the general case]
A segment is {\bf good} if it is of the form $P^d([v,v])$ for some $v \in V$ and
$d \in \bbN$. 
In our algorithm the operation {\sc Add}($w$) will update the information in the good segments
of the form $P^d([w,w])$, and the operation {\sc Count}($v$) will follow the algorithm from Proposition
\ref{candodist}, but instead of considering the single segment $[w,w]$, it will count all of them,
by using the information stored in the segments close to $P^d([v,v])$.
%The full algorithm is in Appendix \ref{ommited}.
Our algorithm will optimize by representing all the good segments
coming from tiles $v$ added to our structure.

We call a tile or good segment {\it active} if it has been already generated,
and thus is represented as an object in memory.
For each active tile $v \in V$ we keep two lists $L_L(v), L_R(v)$
of active segments $S$ such
that $v$ is respectively the leftmost and rightmost element of $S$. 
Each active segment $S$ also has a pointer to
$P(S)$, which is also active (and thus, all the ancestors of $S$ are active too), and a
dynamic array of integers $a(S)$. Initially, there are no active tiles or good segments;
when we activate a segment $S$, its $a(S)$ is initially filled with zeros.
The value of $a(S)[i]$ represents the total $f(w)$ for all tiles $w$ which yield 
the segment $S$ after $i$ operations of the algorithm from the proof of Proposition \ref{candodist},
i.e., $a(S)[i] = \sum_{w: P^i[w,w] = S} f(w)$.

The operation {\sc Add}($w$, $k$) works as follows: for each $i = 0, \ldots, \dist_0(v)$, we simply add $k$ to $a(P^i(S))[i]$.
In the pseudocode below, we assume that $P(S)$ returns {\bf null} if $S$ is the root segment.

\begin{enumerate}
\itemsep 0em
\def\i{\item}
\def\iz{\item}
\def\cindent{\hskip 0.5cm}

\iz {\bf function} {\sc Add}$(w, x)$:
\i  \cindent S := $[w,w]$
\i  \cindent i := 0
\i  \cindent {\bf while} $S \neq$ {\bf null}:
\i  \cindent \cindent $a(S)[i] = a(S)[i] + x$
\i  \cindent \cindent $S := P(S)$
\i  \cindent \cindent $i := i + 1$
\end{enumerate}

The operation {\sc Count}($v$) activates $v$ and $S = [v,v]$ together with all its ancestors.
We return the vector $A$ obtained as follows.
We look at $P^i(S)$ for $i = 0, \ldots, \dist_0(v)$, and for each $P^i(S)$,
we look at close good segments $q'$ on the same level, 
lists $L_L(w), L_r(w)$
for all $w$ in distance at most $\tlimit$ from $P_i(S)$.
The intuition here is as
follows: the algorithm from Proposition \ref{candodist}, on reaching $p^{i_1}(v)=S$ and $p^{i_2}(w)=S'$, 
would find out that these two pairs are close enough and return $i_1+i_2+\dist(S,S')$; in our
case, for each $c$ such that $a(S')[c] \neq 0$, we will instead add $a(S')[c]$ to
$A[a_1+\dist(S,S')+c]$. 

We have to be careful that, if we count some vertex $v$ when considering the pair of segments $(S,S')$, we
do not count it again when considering the pair of segments $(P^j(S),P^j(S'))$. This is done in lines
\ref{nocounttwicea}--\ref{nocounttwiceb} in the pseudocode below. By $S^L$ and $S^R$ we respectively
denote the leftmost and rightmost vertex of the segment $S$.
%An implementation is available (see Appendix \ref{app_impl}, file {\tt segment.cpp}).

\begin{enumerate}
\itemsep 0em
\def\i{\item}
\def\iz{\item}
\def\cindent{\hskip 0.5cm}

\iz {\bf function} {\sc Count}$(v)$:
\i  \cindent $U = \emptyset$
\i  \cindent {\bf for each active} $S'  \ni v$:
\i  \cindent \cindent insert$(U, (S', 0))$
\i  \cindent $d := 0$
\i  \cindent $S := [v,v]$
\i  \cindent {\bf while} $S \neq {\bf null}:$
\i  \cindent \cindent {\bf for} $i \in {0, \ldots, D(G)}:$
\i  \cindent \cindent \cindent {\bf for each} $S' \in L_R(S^R+i)$:
\i  \cindent \cindent \cindent \cindent insert$(U, (S', d + \dist(S, S')))$
\i  \cindent \cindent \cindent {\bf for each} $S' \in L_L(S^L-i)$:
\i  \cindent \cindent \cindent \cindent insert$(U, (S', d + \dist(S, S')))$
\i  \cindent \cindent $d := d + 1$
\i  \cindent \cindent $S := P(S)$
\i  \cindent $T = []$
\i  \cindent {\bf for each} $(S', d) \in U$:
\i  \cindent \cindent {\bf for each} i: $T[d+i] = T[d+i] + a(S')[i]$
\i  \cindent \cindent $S''$ := $P(S')$ \label{nocounttwicea}
\i  \cindent \cindent {\bf if} $(S'',d') \in U$ for some $d'$:
\i  \cindent \cindent \cindent {\bf for each} i: $T[d'+i] = T[d'+i] - a(S')[i]$
\i  \cindent \cindent \cindent {\bf break} \label{nocounttwiceb}
\i \cindent {\bf return} $T$
\end{enumerate}
\end{proof}

\section{Details of the GitHub dataset}\label{appendix:github}

In GitHub convention, \emph{following} means a registered user
agreed to be sent notifications about other user's activity within the service.
We represent this relationship using the following graph $\mathcal{G}_f$.
There is an edge in $\mathcal{G}_f$ between A and B if and only if A follows B.
Mechanisms behind the creation of this network involve users'
popularity and the similarity, which suggests underlying hyperbolic geometry of $\mathcal{G}_f$. $\mathcal{G}_f$ also shows power-law-like scale behavior \cite{euromed}; we
believe it is a useful benchmark for our analysis. Since the complete download of GitHub
data is impossible, our dataset is combined from two sources: GHTorrent project
\cite{ght} and GitHubArchive project \cite{gha}. The analyzed networks contain 
information about the following relationships in two snapshots: F09 covers relationships that occurred in the service from 2008 to 2009 and F11 covers the same during 2008-2011 period.

\section{Choice of the tessellation}\label{appendix:choosetes}

\begin{table*}
\begin{center}
%\resizebox{\linewidth}{!}
{
\begin{tabular}{|l|rrr|rr|rrr|}
\hline
grid      & $L_3$   & $L_5$   & $L_7$   & MB    & \#it & $t_m$ [s] & $t_l$ [s] & $t_e$ [s] \\
\hline                
$G_{810}$ & -187738 & -172018 & -172585 & 46    & 37   & 0.180 & 0.027 & 14.17 \\
$G_{710}$ & -182721 & -170074 & -170873 & 40    & 29   & 0.194 & 0.030 & 12.13 \\
$G_{711}$ & -179125 & -167991 & -168445 & 61    & 23   & 0.281 & 0.058 & 17.82 \\
$G_{720}$ & -179977 & -168105 & -168817 & 98    & 71   & 1.025 & 0.094 & 91.87 \\
$G_{721}$ & -178108 & -167407 & -167824 & 146   & 99*  & 1.359 & 0.208 & 282.0 \\
$G_{753}$ & -177254 & -166889 & -167648 & 1050  & 99*  & 4.446 & 3.059 & 4999  \\
$B_{2}  $ & -180354 & -168055 & -168338 & 47    & 15   & 1.278 & 0.037 & 17.17 \\
$B_{1.1}$ & -180112 & -169019 & -168134 & 54    & 11   & 1.513 & 0.041 & 4.362 \\
$B_{1.0}$ & -179554 & -168717 & -168214 & 53    & 59   & 1.555 & 0.042 & 8.830 \\
$B_{0.9}$ & -179500 & -168973 & -168282 & 56    & 45   & 1.607 & 0.042 & 22.56 \\
$B_{0.5}$ & -179742 & -168906 & -168017 & 62    & 7    & 2.158 & 0.046 & 6.182 \\
$\{5,4\}$ & -195952 & -173641 & -175671 & 38    & 20   & 0.159 & 0.024 & 5.700 \\
\hline                                                   
\end{tabular}}
\vskip 1em
$\begin{smallmatrix} \includegraphics[width=.20\textwidth]{tes/tiling-hep.pdf} \\ G_{710} \end{smallmatrix}$
$\begin{smallmatrix} \includegraphics[width=.20\textwidth]{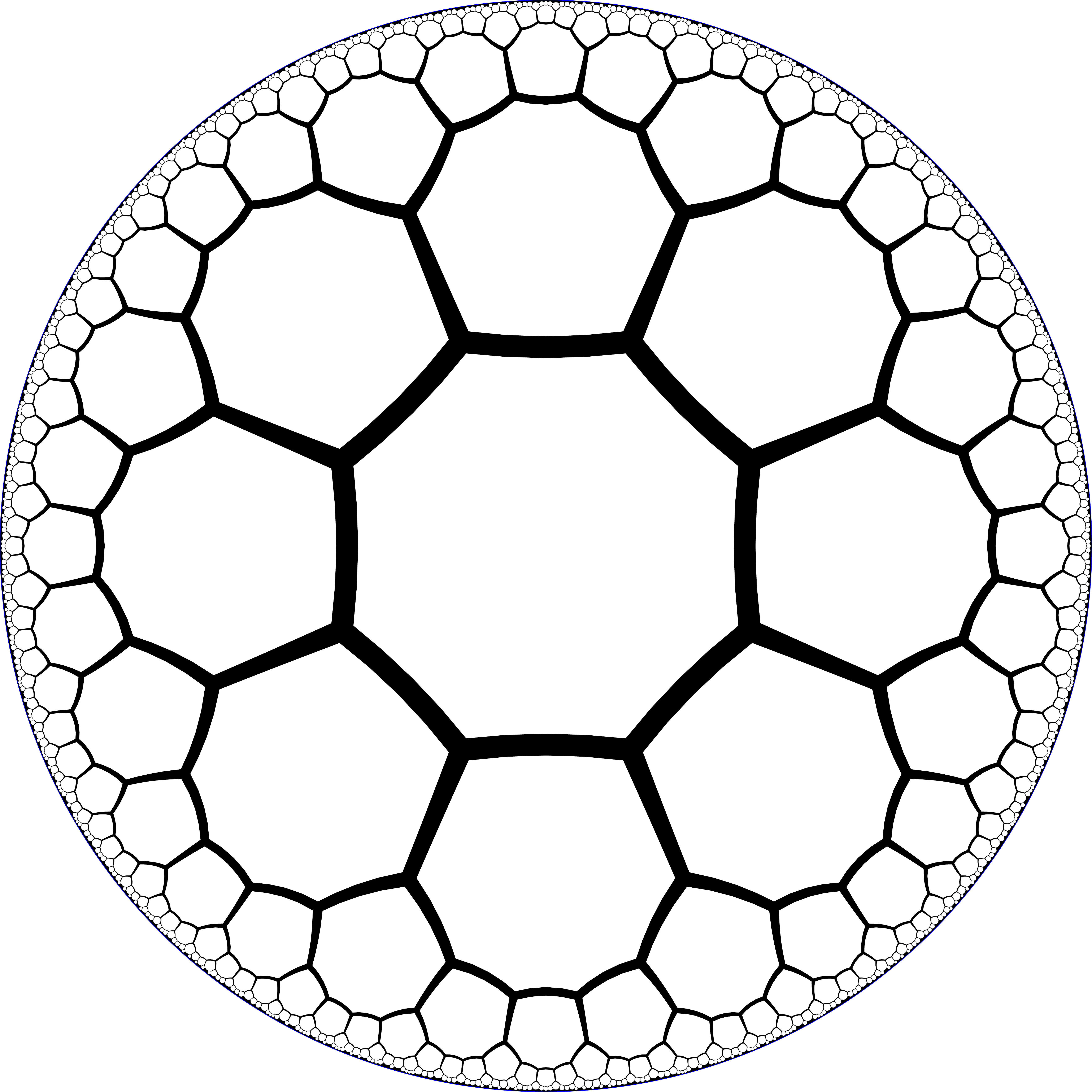} \\ G_{810} \end{smallmatrix}$
$\begin{smallmatrix} \includegraphics[width=.20\textwidth]{tes/tiling-711.pdf} \\ G_{711} \end{smallmatrix}$
$\begin{smallmatrix} \includegraphics[width=.20\textwidth]{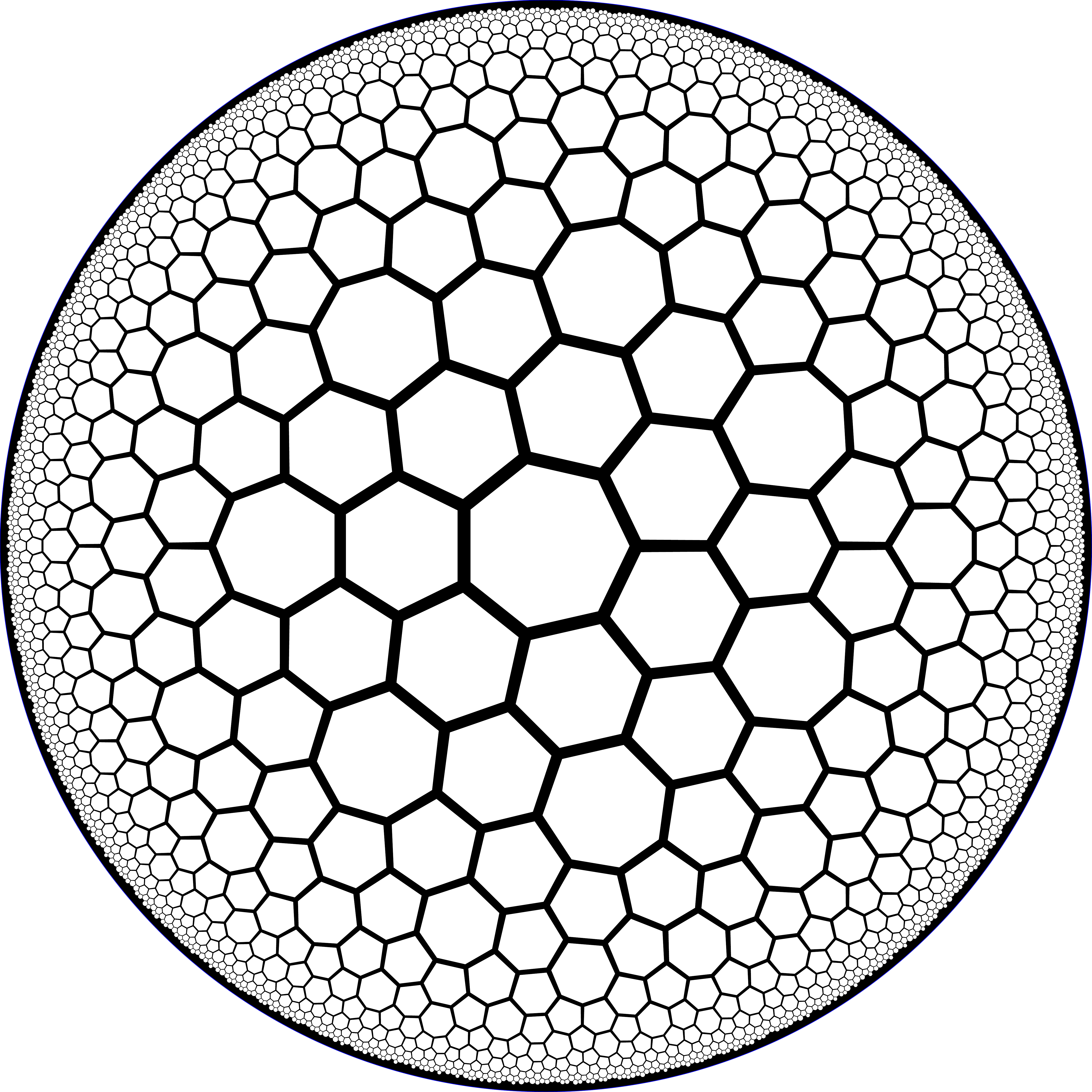} \\ G_{720} \end{smallmatrix}$
$\begin{smallmatrix} \includegraphics[width=.20\textwidth]{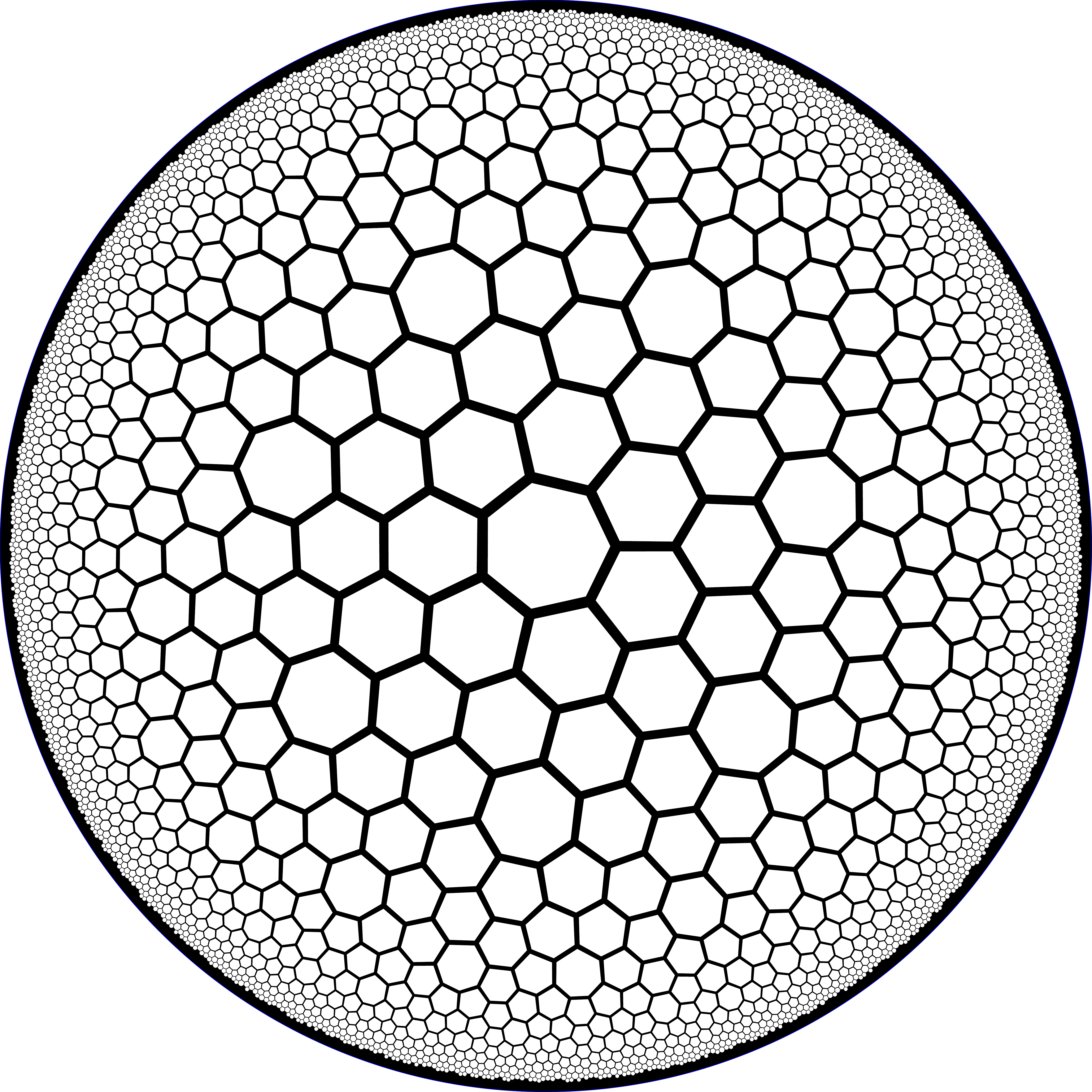} \\ G_{721} \end{smallmatrix}$
$\begin{smallmatrix} \includegraphics[width=.20\textwidth]{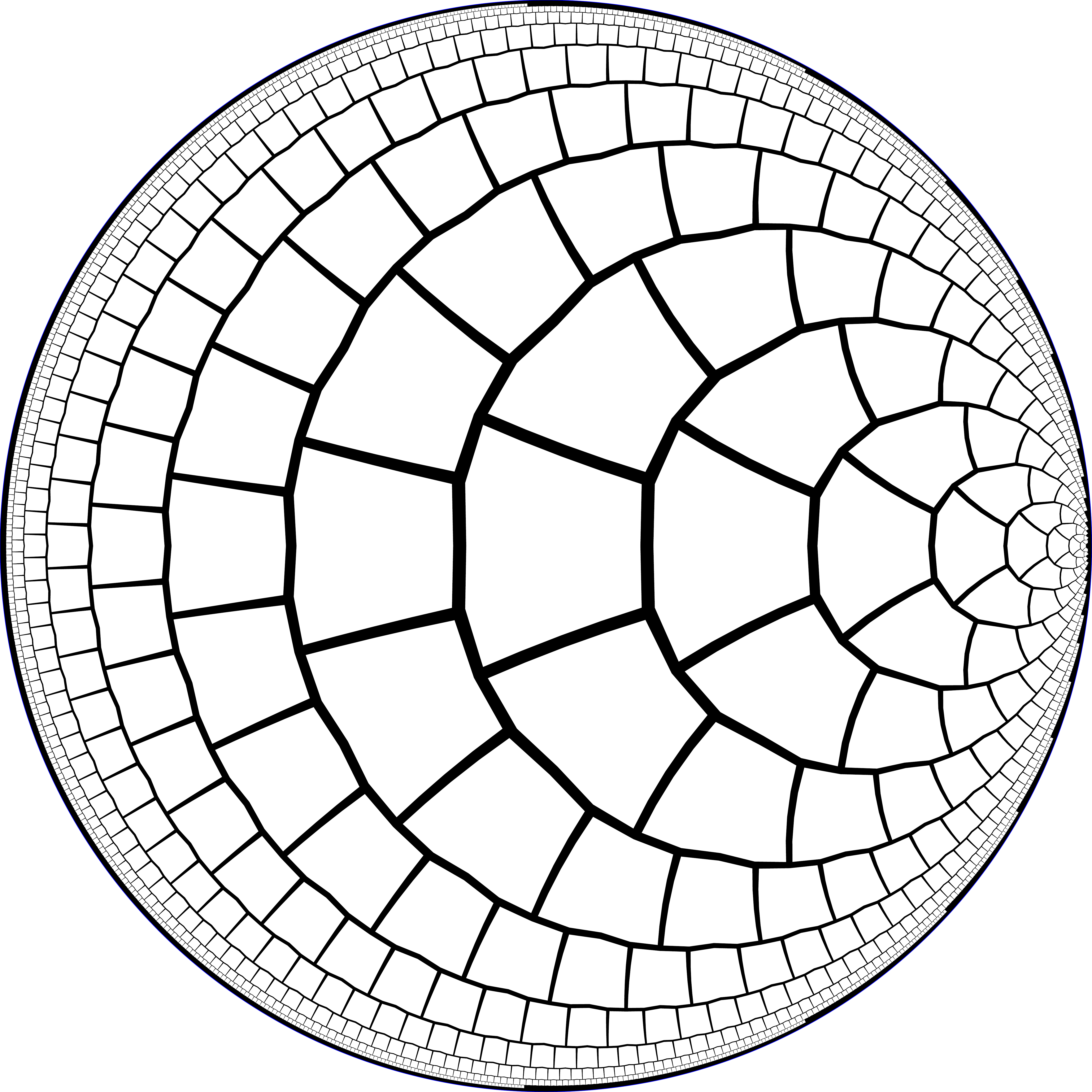} \\ B_{1.0} \end{smallmatrix}$
$\begin{smallmatrix} \includegraphics[width=.20\textwidth]{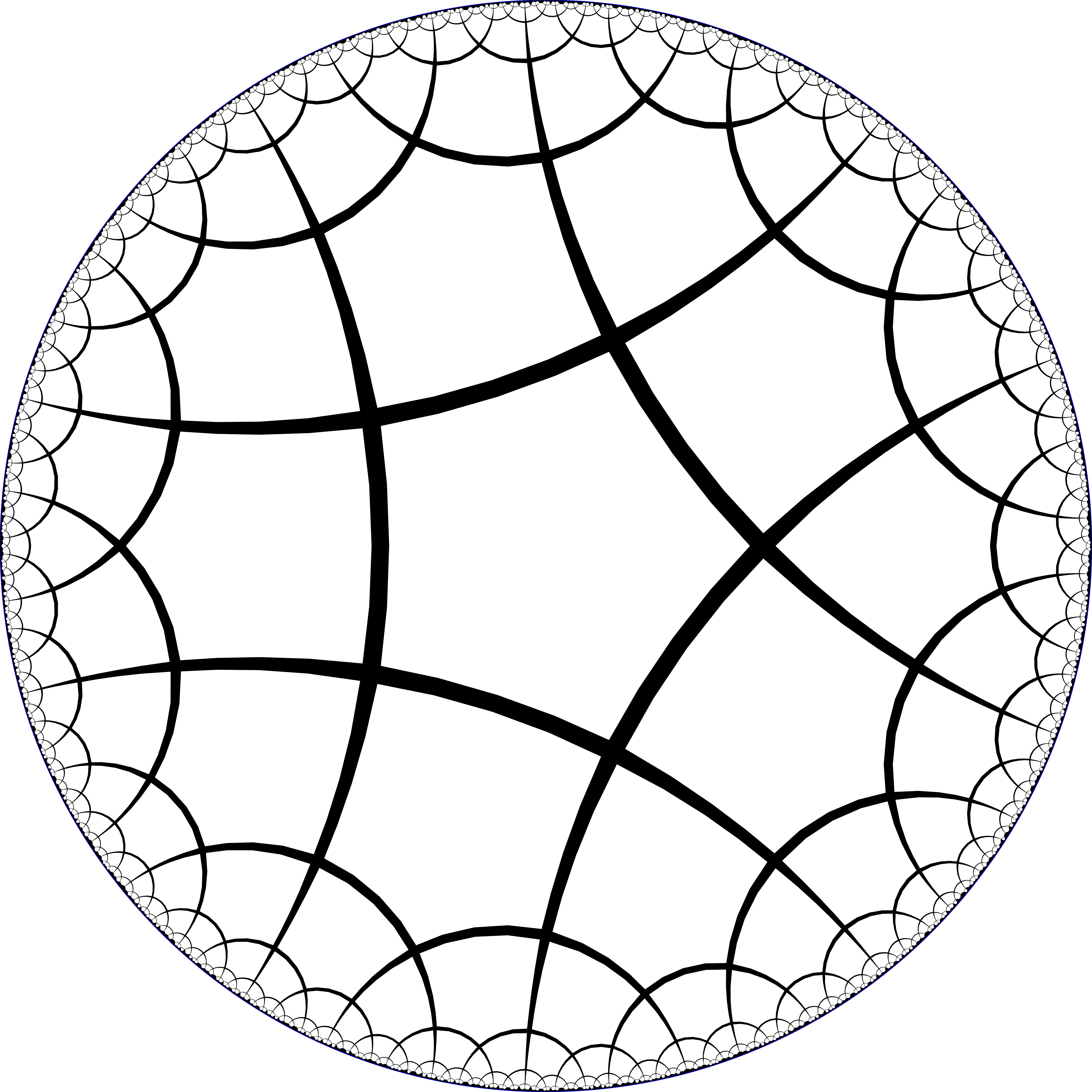} \\ \{5,4\} \end{smallmatrix}$
\end{center}
\caption{Experimental results on the Facebook network ($L_2=-176131$).\label{tab:facebook}}
\end{table*}

Table \ref{tab:facebook} presents the experimental results of running our algorithm on the Facebook social circle network
graph for various tessellations. We can obtain a coarser grid than $\ghoz$ by using octagons instead of heptagons ($G_{810}$),
and a finer grid by using the Goldberg-Coxeter construction, which adds extra hexagons to the order-3 heptagonal tessellation ($G_{7ab}$). We can also
use a tessellation $B_x$ based on the binary tiling \cite{boroczky} (where $x$ is the width of the tile), or $\{5,4\}$, where four pentagons
meet in a vertex. Log-likelihoods are named as in Section \ref{sec:experiments}:
\\
$L_2$, $L_7$ -- continuous best log-likelihoods
\\
$L_3$, $L_5$ -- best discrete log-likelihoods, logistic function
\\
$L_4$, $L_6$ -- best discrete log-likelihood, arbitrary function of distance (used for local search)
%$L_4^*$, $L_4^*$ -- discrete log-likelihood, monotonic function of distance
\\
The column \#it presents the number of iterations of local search; * denotes that we have stopped the process after
this number of iterations, while no * denotes that the local search could not improve the log-likelihood any further.
MB is the amount of memory in megabytes, and time is in seconds.
\\
As we can see, finer tessellations give better log-likelihoods, but a too dense grid dramatically decreases the performance without
giving significant benefits. Tessellation $B_x$ does not yield significantly better results, despite its circles' greater similarity to continuous ones. The results of $\{5,4\}$ are relatively bad; it approximates distances worse than three-valent
tessellations.

%\section{Routing on networks of other sizes}\label{appendix:routing}

%Due to the space limitations we had to remove the results of our routing experiments for networks of other sizes. These details can be found in the arXiv version:
% \url{https://arxiv.org/abs/2109.11772}.

\putbib
\end{bibunit}

\end{document}